\documentclass[pra,onecolumn]{revtex4-2}
\usepackage{enumitem}
\usepackage{amsfonts}
\usepackage{mathrsfs}
\usepackage{graphicx}
\usepackage{amssymb}
\usepackage{pifont}

\usepackage{amsmath}
\usepackage{hyperref}

\usepackage{epstopdf}
\usepackage{cleveref}
\usepackage{amsthm}
\usepackage{physics}
\usepackage{bm}
\usepackage{booktabs}

\hypersetup{
    colorlinks=true,
    linkcolor=blue,
    citecolor=blue,
    urlcolor=blue
}

\newtheorem{theorem}{Theorem}
\newtheorem{corollary}{Corollary}
\newtheorem{lemma}{Lemma}

\newtheorem{example}{Example}

\begin{document}
\title{Separability criteria and lower bounds of entanglement via realignment}
\author{Yuhang Xie}
\affiliation{School of Mathematics and Statistics, Henan University, Kaifeng, 475004, China}
\author{Yanjun Chu}
\email{chuyj@henu.edu.cn}
\affiliation{School of Mathematics and Statistics, Henan University, Kaifeng, 475004, China}
\author{Shao-Ming Fei}
\email{feishm@cnu.edu.cn}
\affiliation{School of Mathematics Sciences, Capital Normal University, Beijing, 100048, China}

\begin{abstract}
The detection and quantification of quantum entanglement are central problems in quantum information science. Based on the realignment of density matrices and the vectorization of reduced density matrices, we construct a family of enhanced separability criteria for bipartite and multipartite quantum systems. Our criteria recover the previously proposed realignment-based criteria as special cases and detect more entanglement than the earlier ones. Moreover, we derive tighter lower bounds for the concurrence and the convex-roof extended negativity, criteria for detecting the genuine tripartite entanglement, and lower bounds on genuine-multipartite-entanglement concurrence. The advantages of our results are demonstrated with detailed examples.
\end{abstract}

\maketitle

\section{Introduction}
Quantum entanglement is one of the most remarkable features of quantum mechanics and lies at the heart of quantum information science. As a fundamental physical resource, it plays an indispensable role in various quantum information processing tasks, such as quantum computation \cite{DivincenzoScience255,BennettNature247}, quantum cryptography \cite{EkertPRL661,FuchsPRA1163,DeutschPRL2818,HilleryPRA1829,LongPRA032302}, quantum teleportation \cite{BennettPRL1895,AlbeverioPRA012301,Gao50001}, and dense coding \cite{Bennett2881}. A fundamental problem in the theory of quantum entanglement is the detection and quantification of entanglement. Let $\mathcal{H}_A$ ($\mathcal{H}_B$) denote the Hilbert space associated with systems $A$ ($B$). A bipartite state \(\rho\in\mathcal{H}_A\otimes\mathcal{H}_B\) is said to be separable if it can be written as the convex combination of the products of local states,
\begin{equation}\label{sp}
\rho = \sum_k p_k \rho^A_k \otimes \rho^B_k,
\end{equation}
where $p_k\ge0$ with $\sum_k p_{k}=1$ and $\rho^A_k$, $\rho^B_k$ are pure states on the subsystems $A$, $B$, respectively. Otherwise, it is said to be entangled. Over the past decades, many separability criteria have been established, such as the positive partial transpose (PPT) criterion \cite{PeresPRL1413,HorodeckiPLA1996,HorodeckiPLA333}, the covariance matrix criterion \cite{GuhnePRL130504}, and the correlation matrix criterion \cite{VicenteJPA065309}. Nevertheless, the efficient and universal detection of entanglement still remains a central and largely open challenge.

Among the separability criteria, the realignment criterion \cite{Rudolph2003,Chen2003matrix} is particularly useful as it detects some bound entangled states that escape detection by the PPT criterion.  It says that the realigned matrix \(\mathcal{R}(\rho)\) of any separable state \(\rho\) satisfies \(\|\mathcal{R}(\rho)\|_{\mathrm{tr}} \leq 1\). Subsequently, Zhang \textit{et al.} \cite{Zhang2008} improved the realignment criterion by considering the realignment of \(\rho-\rho_A\otimes\rho_B\). Shen \textit{et al.} \cite{Shen2015} introduced a matrix constructed by combining the realigned matrix with the vectorization of the reduced density matrices. For multipartite systems, Zhang \textit{et al.} \cite{ZhangQIP106} proposed sequential realignment criteria, and Qi \cite{Qi2024} investigated the detection and quantification of genuine multipartite entanglement (GME) based on the realignment of density matrices. In addition to GME, the finer structure of multipartite entanglement, described by notions such as k-separability, k-nonseparability, and k-partite entanglement, has also been studied from different perspectives \cite{HongLuo2016,HongLuoSong2015,HongGaoYan2021,QiPangHou2025,GuoHouQiHe2025,QiGaoYan2017}. 

More recently, a fruitful line of work has emerged by enlarging the realigned matrix with additional blocks. Shi and Sun \cite{Shi2023family} constructed a matrix \(\mathcal{N}_{\alpha,\beta}(\rho)\) with scalar parameters \(\alpha\) and \(\beta\); Sun \textit{et al.} \cite{SunPRA012415} further generalized it to \(\mathcal{M}_{\alpha,\beta}^l(\rho)\) with an additional integer $l$; Lu \textit{et al.} \cite{LuQIP134} replaced the scalar parameters with two vectors \(\bm{a}\) and \(\bm{b}\). A different route to construct such enlarged matrices relies on symmetric measurements rather than the direct rearrangement of density matrix elements. Lu \textit{et al.} \cite{LuPRA062424} employed the \((N,M)\)-POVMs \(\{E_{\alpha,k} \mid k=1,2,\dots,M\}\ (\alpha=1,2,\dots,N)\) to arrange the probabilities of local measurement outcomes into a matrix: the lower-right block is formed by the joint probabilities \(\mathrm{Tr}[(E_{\alpha,i}\otimes E_{\beta,j})\rho]\), while the border blocks are filled with the reduced-state probabilities \(\mathrm{Tr}(E_{\alpha,i}\rho_A)\) and \(\mathrm{Tr}(E_{\beta,j}\rho_B)\). Because this construction uses local measurement probabilities in place of vectorized reduced density matrices, it can be implemented experimentally without reconstructing the full density matrix. This construction was subsequently enhanced in Ref.~\cite{LuAQT70344}, where scalar parameters were replaced by adjustable parameter vectors. Refs.~\cite{LaiLuo2023,QiPangHou2025QIP} also constructed correlation matrices via local measurements to derive separability criteria. Recently, we note that Xu \textit{et al}. \cite{XuAQTe704532026} constructed correlation matrices using the Type-II Discrete Cosine Transform to derive a class of separability criteria, which unifies and significantly generalizes the computable cross-norm and realignment criterion (CNNR) \cite{Chen2003matrix,RudolphPRA0323122003,RudolphQIC2192005,Zhang2008,Shen2015}, the de Vicente criterion \cite{VicenteQIC6242007}, as well as a large family of existing separability criteria formulated within the Bloch representation framework \cite{LiPRA0223252014,ShenSciRep288502016,ZhuQIP1852023}. In all these constructions presented in Refs.~\cite{Shi2023family,SunPRA012415,LuQIP134}, the upper-left block is either a scalar or a matrix of rank at most one. In this work, we advance this idea by permitting the upper-left block to be of arbitrary rank, thereby unifying and strengthening the aforementioned quantum separability criteria.

The quantification of entanglement constitutes another central topic in quantum information theory. Among various entanglement measures, the concurrence and the convex-roof extended negativity (CREN) are two widely studied quantities \cite{RungtaPRA042315,VidalPRA032314,LeePRA062304,ChenAlbeverioFei,deVicentePRA052320,MaGMEconcurrence,HuberPRA042328,ChenQIP3761,ZhuQIP30,LiQIP130}. However, both concurrence and CREN are defined via convex-roof extensions, which require minimization over all pure-state decompositions and are generally hard to compute. Accordingly, deriving tight, computable lower bounds for these measures is highly desirable. From the arbitrary-rank criterion, we derive lower bounds for the concurrence and CREN of bipartite states, together with a lower bound for the GME concurrence of tripartite states.

The remainder of this paper is structured as follows. In Sec.~\ref{sectII}, we define the generalized matrix \(\mathcal{Q}_{P,Q,\mu,\nu}(\rho)\) and establish the corresponding separability criterion, based on which we derive lower bounds for the concurrence and CREN of bipartite quantum states. In Sec.~\ref{sectIII}, we extend the proposed criterion to multipartite quantum systems and present criteria identifying biseparable and fully separable states. We further derive a lower bound for the GME concurrence of tripartite states. Finally, conclusions and discussions are given in Sec.~\ref{sectIV}.

\section{Detection and quantification of entanglement for bipartite systems}\label{sectII}

\subsection{Separability criteria for bipartite quantum states}

Let $\mathbb{C}^{m \times n}$ be the set of all $m \times n$ matrices over the complex field $\mathbb{C}$, and $\mathbb{R}$ be the real number field. For a matrix $A=(a_{ij}) \in \mathbb{C}^{m \times n}$, the vectorization of matrix $A$ is defined as $\operatorname{vec}(A)=(a_{11}, \cdots, a_{m1}, a_{12}, \cdots, a_{m2}, \cdots, a_{1n}, \cdots, a_{mn})^T$, where $T$ stands for the transpose.
Let $Z$ be an $m \times m$ block matrix with subblocks $Z_{i, j} \in \mathbb{C}^{n \times n}$, $i, j=1, \cdots, m$. The realigned matrix $\mathcal{R}(Z)$ of $Z$ is defined by
\begin{equation*}
	\mathcal{R}(Z)=
	\begin{pmatrix}
		\operatorname{vec}(Z_{1,1})^T \\
		\vdots \\
		\operatorname{vec}(Z_{m, 1})^T \\
		\vdots \\
		\operatorname{vec}(Z_{1, m})^T \\
		\vdots \\
		\operatorname{vec}(Z_{m, m})^T
	\end{pmatrix}.
\end{equation*}

The realignment criterion proposed in Ref. \cite{Chen2003matrix} states that any separable state $\rho$ on the bipartite space $\mathbb{C}^{m} \otimes \mathbb{C}^{n}$ obeys the inequality $\|\mathcal{R}(\rho)\|_\mathrm{tr} \leq 1$, where the trace norm of a matrix $A$ is defined as $\|A\|_\mathrm{tr}=\operatorname{Tr}\left(\sqrt{A^\dagger A}\right)$.
For a general bipartite state $\rho_{AB}$, Shi and Sun \cite{Shi2023family} introduced the following matrix:
\begin{equation}\label{realignment01}\mathcal{N}_{\alpha,\beta}(\rho)=
	\begin{pmatrix}
		\alpha\beta & \alpha\operatorname{vec}(\rho_B)^T \\
		\beta\operatorname{vec}(\rho_A) & \mathcal{R}(\rho)
	\end{pmatrix},
\end{equation}
where $\alpha,\beta\in \mathbb{R}$. They proved that for any separable state $\rho$,
$\|\mathcal{N}_{\alpha,\beta}(\rho)\|_{\mathrm{tr}} \leq \sqrt{(\alpha^2+1)(\beta^2+1)}.$
Subsequently, Sun et al. \cite{SunPRA012415} further constructed a generalized matrix $\mathcal{M}_{\alpha,\beta}^l(\rho)$ for bipartite states $\rho_{AB}$, which takes the form:
\begin{equation}\label{realignment02}
	\mathcal{M}_{\alpha, \beta}^l(\rho)=
	\begin{pmatrix}
		\alpha\beta E_{l \times l} & \alpha\omega_l(\rho_B)^T \\
		\beta\omega_l(\rho_A) & \mathcal{R}(\rho)
	\end{pmatrix},
\end{equation}
where $\alpha$ and $\beta$ are arbitrary real constants, $l\in\mathbb{N}$, $E_{l \times l}$ denotes the $l\times l$ all-ones matrix, and $\omega_l(X)$ represents the matrix consisting of $l$ identical columns with each column being $\operatorname{vec}(X)$, i.e., $\omega_l(X)=(\underbrace{\operatorname{vec}(X),\cdots,\operatorname{vec}(X)}_{l\text{ columns}})$. It is shown that any separable bipartite state $\rho\in\mathbb{C}^{m} \otimes \mathbb{C}^{n}$ satisfies
$$\left\|\mathcal{M}_{\alpha, \beta}^l(\rho)\right\|_{\mathrm{tr}} \leq \sqrt{(l\alpha^2+1)(l \beta^2+1)}.$$
Moreover, Lu \textit{et al}. \cite{LuQIP134} introduced the matrix $\mathcal{Q}_{\bm{a},\bm{b}}(\rho)$ with respect to a bipartite state $\rho_{AB}$ as
\begin{equation}\label{realignment03}
	\mathcal{Q}_{\bm{a},\bm{b}}(\rho)=
\begin{pmatrix}
	\bm{a}\bm{b}^T & \bm{a}\operatorname{vec}(\rho_B)^T \\
	\operatorname{vec}(\rho_A)\bm{b}^T & \mathcal{R}(\rho)
\end{pmatrix},
\end{equation}
where $\bm{a}=(a_1,\cdots,a_m)^T\in\mathbb{R}^{m \times 1}$ and $\bm{b}=(b_1,\cdots,b_n)^T\in\mathbb{R}^{n \times 1}$. They demonstrated that if $\rho$ is a separable state, then
\begin{equation*}
	\left\|\mathcal{Q}_{\bm{a},\bm{b}}(\rho)\right\|_{\mathrm{tr}} \leq \sqrt{\left(|\bm{a}|^2+1\right)\left(|\bm{b}|^2+1\right)}.
\end{equation*}

For a bipartite state $\rho_{AB}$, we define the following matrix,
\begin{equation}\label{realignment matrix}
	\mathcal{Q}_{P,Q,\mu,\nu}(\rho)=
	\begin{pmatrix}
		PQ^T & P\mu\operatorname{vec}(\rho_B)^T \\
		\operatorname{vec}(\rho_A)\nu^TQ^T & \mathcal{R}(\rho)
	\end{pmatrix},
\end{equation}
where $P\in\mathbb{R}^{m \times r}, Q\in\mathbb{R}^{n \times r}$, $\mu, \nu\in\mathbb{R}^{r \times 1}$ with $\nu^T\mu=1$. This matrix includes the matrices (\ref{realignment01}), (\ref{realignment02}) and (\ref{realignment03}) as particular ones.
If \(m\geq r\), \(n\geq r\), and \(\mathrm{rank}(P)=\mathrm{rank}(Q)=r\), we have \(\mathrm{rank}(PQ^T)=r\).
If $m=n=r=1$, $\mu=\nu=1$, $P=\alpha$ and $Q=\beta$, the matrix (\ref{realignment matrix}) reduces to the matrix (\ref{realignment01}) constructed in Ref.~\cite{Shi2023family}. If $r=1$, $\mu=\nu=1$, $m=n=l$, and $P=(\underbrace{\alpha,\cdots,\alpha}_{l\text{ columns}})^T$, $Q=(\underbrace{\beta,\cdots,\beta}_{l\text{ columns}})^T$, the matrix (\ref{realignment matrix}) reduces to the matrix (\ref{realignment02}) constructed in Ref.~\cite{SunPRA012415}. If $r=1$, $\mu=\nu=1$, $P=\bm{a}$, and $Q=\bm{b}$, then the matrix (\ref{realignment matrix}) reduces to the matrix (\ref{realignment03}) constructed in Ref.~\cite{LuQIP134}.
The following lemma presents two fundamental properties of the matrix $\mathcal{Q}_{P,Q,\mu,\nu}(\rho)$. The detailed proof is provided in Appendix \ref{appendixA}.

\begin{lemma}\label{lem1}
(1) \textbf{Linearity.} Let $\rho_k \in \mathcal{H}_A \otimes \mathcal{H}_B$ and $p_k \in \mathbb{R}_{\geq 0}$, $k=1,2,\dots,n$, such that $\sum\limits_{k=1}^{n}p_k=1$. Then
\begin{equation*}
	\mathcal{Q}_{P,Q,\mu,\nu}\left(\sum_{k=1}^{n}p_k\rho_k\right)=\sum_{k=1}^{n}p_k\mathcal{Q}_{P,Q,\mu,\nu}(\rho_k).
\end{equation*}
(2) \textbf{Local unitary invariance.} Let $U$ and $V$ be unitary matrices on subsystems $A$ and $B$, respectively. For any $\rho\in\mathbb{C}^{m}\otimes\mathbb{C}^{n}$, we have
\begin{equation*}
	\Big\|\mathcal{Q}_{P,Q,\mu,\nu}\big((U \otimes V)\rho(U \otimes V)^{\dagger}\big)\Big\|_{\mathrm{tr}}
	=\big\|\mathcal{Q}_{P,Q,\mu,\nu}(\rho)\big\|_{\mathrm{tr}},
\end{equation*}
where subsystems A and B have dimensions m and n, respectively.
\end{lemma}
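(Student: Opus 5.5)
For part (1), I would check each block of $\mathcal{Q}_{P,Q,\mu,\nu}$ separately. The maps $\rho\mapsto\mathcal{R}(\rho)$, $\rho\mapsto\rho_A=\operatorname{Tr}_B\rho$, $\rho\mapsto\rho_B=\operatorname{Tr}_A\rho$ and $X\mapsto\operatorname{vec}(X)$ are all linear, since each one only rearranges entries or sums them. So the off-diagonal blocks $P\mu\operatorname{vec}(\rho_B)^T$ and $\operatorname{vec}(\rho_A)\nu^TQ^T$, and the block $\mathcal{R}(\rho)$, are linear in $\rho$. The only block that is not linear is the constant block $PQ^T$, because it does not depend on $\rho$. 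This is exactly where $\sum_k p_k=1$ enters: $\sum_k p_k PQ^T=PQ^T$. Hence the combination is affine, and the identity holds for convex combinations.

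For part (2), I would use the standard identities
\[
\mathcal{R}\big((U\otimes V)\rho(U\otimes V)^\dagger\big)=(U\otimes \overline{U})\,\mathcal{R}(\rho)\,(V\otimes\overline{V})^T,
\qquad
\operatorname{vec}(UXU^\dagger)=(\overline{U}\otimes U)\operatorname{vec}(X).
\]
Using the row-ordering convention of $\mathcal{R}$ fixed above, the first one I would verify on product operators $\mathcal{R}(A\otimes B)=\operatorname{vec}(A)\operatorname{vec}(B)^T$ and then extend by linearity. The main care here is matching the conjugation and ordering conventions, so that the same unitary $W_A$ acts on $\operatorname{vec}(\rho_A)$ and on the left of $\mathcal{R}(\rho)$, and likewise $W_B$ acts on $\operatorname{vec}(\rho_B)$ and on the right. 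Concretely, set $W_A=\overline{U}\otimes U$ (or whichever ordering the convention gives). Under the transformation, $\rho_A\mapsto U\rho_AU^\dagger$ and $\rho_B\mapsto V\rho_BV^\dagger$. Then
\[
\mathcal{R}(\rho')=W_A\mathcal{R}(\rho)W_B^T,\qquad \operatorname{vec}(\rho_A')=W_A\operatorname{vec}(\rho_A),\qquad \operatorname{vec}(\rho_B')^T=\operatorname{vec}(\rho_B)^TW_B^T.
\]

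It then follows that
\[
\mathcal{Q}_{P,Q,\mu,\nu}(\rho')=\begin{pmatrix} I_m & 0\\ 0 & W_A\end{pmatrix}\mathcal{Q}_{P,Q,\mu,\nu}(\rho)\begin{pmatrix} I_n & 0\\ 0 & W_B^T\end{pmatrix},
\]
where the identity blocks have the sizes of $PQ^T$. Both outer block-diagonal matrices are unitary, because $W_A$ and $W_B^T$ are Kronecker products of unitaries. Since the trace norm is unitarily invariant, the equality of trace norms follows. The main obstacle is the index bookkeeping in the realignment identity: one must confirm that the left factor on $\mathcal{R}(\rho)$ coincides exactly with the factor acting on $\operatorname{vec}(\rho_A)$. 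This is what requires checking on $\rho=A\otimes B$ with $\rho_A=(\operatorname{Tr}B)A$ and $\rho_B=(\operatorname{Tr}A)B$, where both sides visibly transform by the same $\operatorname{vec}$ rule.
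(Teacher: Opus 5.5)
Your proposal is correct and follows the paper's proof essentially step for step. For (1) you use blockwise linearity together with $\sum_k p_k PQ^T=PQ^T$; for (2) you verify the transformation rules on product operators via $\mathcal{R}(A\otimes B)=\operatorname{vec}(A)\operatorname{vec}(B)^T$ and $\operatorname{vec}(UXU^\dagger)=(\overline{U}\otimes U)\operatorname{vec}(X)$, then write $\mathcal{Q}_{P,Q,\mu,\nu}(\rho')=\widetilde U\,\mathcal{Q}_{P,Q,\mu,\nu}(\rho)\,\widetilde V$ with block-diagonal unitaries $\widetilde U,\widetilde V$. The one slip is your first displayed identity: under the paper's column-major $\operatorname{vec}$ it should read $(\overline{U}\otimes U)\,\mathcal{R}(\rho)\,(\overline{V}\otimes V)^T$, which matches the $W_A=\overline{U}\otimes U$, $W_B=\overline{V}\otimes V$ you actually use.
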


\begin{theorem}\label{thm1}
If a bipatite state $\rho\in\mathcal{H}_A\otimes\mathcal{H}_B$ is separable, then
\begin{equation*}
	\big\|\mathcal{Q}_{P,Q,\mu,\nu}(\rho)\big\|_{\mathrm{tr}} \leq \mathfrak{M}(P,Q,\mu,\nu),
\end{equation*}
where
\begin{equation}\label{threshold} \mathfrak{M}(P,Q,\mu,\nu):=\Big\|\Big(P^TP+\nu\nu^T\Big)^{1/2}\Big(Q^TQ+\mu\mu^T\Big)^{1/2}\Big\|_{\mathrm{tr}}.
\end{equation}
\end{theorem}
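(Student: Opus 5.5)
The plan is to reduce to pure product states using Lemma~\ref{lem1}(1) and the convexity of the trace norm, and then to compute the trace norm exactly for a pure product state via a rank-$r$ factorization. Let $\rho=\sum_k p_k \rho^A_k\otimes\rho^B_k$ with $\rho^A_k=|a_k\rangle\langle a_k|$ and $\rho^B_k=|b_k\rangle\langle b_k|$ pure, as in (\ref{sp}). By linearity and the triangle inequality,
\begin{equation*}
\big\|\mathcal{Q}_{P,Q,\mu,\nu}(\rho)\big\|_{\mathrm{tr}}\le \sum_k p_k \big\|\mathcal{Q}_{P,Q,\mu,\nu}(\rho^A_k\otimes\rho^B_k)\big\|_{\mathrm{tr}} .
\end{equation*}
It therefore suffices to show $\|\mathcal{Q}_{P,Q,\mu,\nu}(\sigma_A\otimes\sigma_B)\|_{\mathrm{tr}}= \mathfrak{M}(P,Q,\mu,\nu)$ (or at most this) for every pure product state. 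The key point is that the right-hand side does not depend on $k$.

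For $\sigma=\sigma_A\otimes\sigma_B$, the realignment factorizes as $\mathcal{R}(\sigma)=\operatorname{vec}(\sigma_A)\operatorname{vec}(\sigma_B)^T$, and the reduced states are $\sigma_A,\sigma_B$. Write $\bm{x}=\operatorname{vec}(\sigma_A)$ and $\bm{y}=\operatorname{vec}(\sigma_B)$, and define the block column matrices
\begin{equation*}
X=\begin{pmatrix} P\\ \bm{x}\,\nu^T\end{pmatrix},\qquad Y=\begin{pmatrix} Q\\ \bm{y}\,\mu^T\end{pmatrix},
\end{equation*}
each having $r$ columns. Multiplying out and using $\nu^T\mu=1$ gives
\begin{equation*}
XY^T=\begin{pmatrix} PQ^T & P\mu\bm{y}^T\\ \bm{x}\nu^TQ^T & \bm{x}\bm{y}^T\end{pmatrix}=\mathcal{Q}_{P,Q,\mu,\nu}(\sigma).
\end{equation*}

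Next I take polar decompositions $X=U|X|$ and $Y=W|Y|$, where $U,W$ are partial isometries and $|X|=(X^\dagger X)^{1/2}$, $|Y|=(Y^\dagger Y)^{1/2}$. Since $P,\nu$ are real, $X^\dagger X=P^TP+\nu\,\bm{x}^\dagger\bm{x}\,\nu^T=P^TP+\|\sigma_A\|_2^2\,\nu\nu^T$. Purity gives $\|\sigma_A\|_2^2=\operatorname{Tr}\sigma_A^2=1$, so $X^\dagger X=P^TP+\nu\nu^T$. Similarly $Y^\dagger Y=Q^TQ+\mu\mu^T$. Both of these are real symmetric, so $|Y|^T=|Y|$, and hence $XY^T=U\,|X|\,|Y|\,W^T$. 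The trace norm is invariant under multiplication by partial isometries whose initial spaces contain the relevant ranges: the range of $|X|$ is the initial space of $U$, and likewise for $W$. Hence $\|XY^T\|_{\mathrm{tr}}=\||X|\,|Y|\|_{\mathrm{tr}}=\mathfrak{M}(P,Q,\mu,\nu)$. If one prefers to avoid the isometry bookkeeping, it suffices to note that $\|U A W^T\|_{\mathrm{tr}}\le\|A\|_{\mathrm{tr}}$ for contractions $U,W$; this already gives the needed inequality.

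The main point requiring care is this last step. One has to handle the complex vectors $\bm{x},\bm{y}$ correctly, using $\bm{x}^\dagger\bm{x}$ rather than $\bm{x}^T\bm{x}$. One also has to track the transpose of $Y$ rather than its adjoint, which is harmless only because $Y^\dagger Y$ is real. Combining the pure-state bound with the convex decomposition yields $\|\mathcal{Q}_{P,Q,\mu,\nu}(\rho)\|_{\mathrm{tr}}\le\sum_kp_k\,\mathfrak{M}=\mathfrak{M}(P,Q,\mu,\nu)$.
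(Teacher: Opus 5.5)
Your proposal is correct and follows essentially the same route as the paper. Both arguments decompose $\rho$ into pure product states, factor $\mathcal{Q}_{P,Q,\mu,\nu}(\sigma_A\otimes\sigma_B)=XY^T$ with Gram matrices $P^TP+\nu\nu^T$ and $Q^TQ+\mu\mu^T$, identify the trace norm with $\mathfrak{M}(P,Q,\mu,\nu)$, and conclude with Lemma~\ref{lem1}(1) and the triangle inequality. Your polar-decomposition step is in fact more careful than the paper's: the paper writes $U_k^TU_k$ and uses $\|XY^T\|_{\mathrm{tr}}=\|(X^TX)^{1/2}(Y^TY)^{1/2}\|_{\mathrm{tr}}$ for complex matrices, whereas you use $X^\dagger X$ and note that the transpose is harmless only because $Y^\dagger Y$ is real.
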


\begin{proof}
If $\rho$ is separable, it can be written as (\ref{sp}), $\rho=\sum_k p_k \rho^A_k\otimes\rho^B_k$. For any product state $\rho^A_k \otimes \rho^B_k$,
\begin{equation*}
	\mathcal{Q}_{P,Q,\mu,\nu}\left(\rho^A_k \otimes \rho^B_k\right)=U_k V_k^T,~~~
	U_k=\begin{pmatrix}P \\ \operatorname{vec}(\rho^A_{k})\nu^T\end{pmatrix},~~~
	V_k=\begin{pmatrix}Q\\ \operatorname{vec}(\rho^B_k)\mu^T\end{pmatrix}.
\end{equation*}
Since $\rho^A_k,\rho^B_k$ are pure states, $\|\operatorname{vec}(\rho^A_k)\|^2=\operatorname{Tr}[(\rho^A_{k})^2]=1$ and $\|\operatorname{vec}(\rho^B_k)\|^2=\operatorname{Tr}[(\rho^B_k)^2]=1$. Hence
\begin{equation*}
	U_k^TU_k = P^TP + \nu\nu^T,~~~V_{k}^TV_k = Q^TQ + \mu\mu^T.
\end{equation*}

Using the identity $\|XY^T\|_{\mathrm{tr}}=\left\|\left(X^TX\right)^{1/2}\left(Y^TY\right)^{1/2}\right\|_{\mathrm{tr}}$, we get
\begin{equation*}
	\left\|\mathcal{Q}_{P,Q,\mu,\nu}(\rho^A_k\otimes\rho^B_k)\right\|_{\mathrm{tr}}
	=\left\|\left(U_k^TU_k\right)^{1/2}\left(V_k^TV_k\right)^{1/2}\right\|_{\mathrm{tr}}
	=\mathfrak{M}(P,Q,\mu,\nu).
\end{equation*}
Therefore, by Lemma \ref{lem1} (i) and the triangle inequality, we have
\begin{equation*}
	\big\|\mathcal{Q}_{P,Q,\mu,\nu}(\rho)\big\|_{\mathrm{tr}}=\left\lVert\sum_k p_k \mathcal{Q}_{P,Q,\mu,\nu}\Big(\rho^A_k\otimes\rho^B_k\Big)\right\rVert_{\mathrm{tr}}
	\leq\sum_k p_k \big\|\mathcal{Q}_{P,Q,\mu,\nu}(\rho^A_k\otimes\rho^B_k)\big\|_{\mathrm{tr}}
	=\mathfrak{M}(P,Q,\mu,\nu),
\end{equation*}
which completes the proof.
\end{proof}

\begin{corollary}\label{cor1}
If $\rho\in\mathcal{H}_A\otimes\mathcal{H}_B$ is separable, then
\begin{equation}\label{c1}
\big\|\mathcal{Q}_{P,Q,\mu,\nu}(\rho)\big\|_{\mathrm{tr}}
\leq\sqrt{\|P\|_{\operatorname{SH}}^2+|\nu|^2}\ \sqrt{\|Q\|_{\mathrm{SH}}^2+|\mu|^2}.
\end{equation}
\end{corollary}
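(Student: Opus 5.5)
The plan is to derive the corollary directly from Theorem \ref{thm1}: we bound $\mathfrak{M}(P,Q,\mu,\nu)$ from above by the product of Frobenius norms. Here $\|\cdot\|_{\mathrm{SH}}$ denotes the Hilbert--Schmidt (Frobenius) norm. Set
\[
A=\big(P^TP+\nu\nu^T\big)^{1/2},\qquad B=\big(Q^TQ+\mu\mu^T\big)^{1/2}.
\]
Both are positive semidefinite $r\times r$ matrices, and by (\ref{threshold}) we have $\mathfrak{M}(P,Q,\mu,\nu)=\|AB\|_{\mathrm{tr}}$.

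The key step is the Hölder (Cauchy--Schwarz) inequality for Schatten norms, $\|AB\|_{\mathrm{tr}}\le\|A\|_{\mathrm{SH}}\|B\|_{\mathrm{SH}}$. To verify it, write $\|AB\|_{\mathrm{tr}}=\max_{W}|\operatorname{Tr}(W^\dagger AB)|$, where $W$ ranges over unitaries. Cauchy--Schwarz for the Hilbert--Schmidt inner product then gives
\[
|\operatorname{Tr}\big((AW)^\dagger B\big)|\le\|AW\|_{\mathrm{SH}}\|B\|_{\mathrm{SH}}=\|A\|_{\mathrm{SH}}\|B\|_{\mathrm{SH}},
\]
using the unitary invariance of the Frobenius norm and the fact that $A$ is Hermitian.

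It remains to compute the two Frobenius norms. For the positive semidefinite square root,
\[
\|A\|_{\mathrm{SH}}^2=\operatorname{Tr}(A^2)=\operatorname{Tr}(P^TP)+\operatorname{Tr}(\nu\nu^T)=\|P\|_{\mathrm{SH}}^2+|\nu|^2,
\]
and in the same way $\|B\|_{\mathrm{SH}}^2=\|Q\|_{\mathrm{SH}}^2+|\mu|^2$. Chaining Theorem \ref{thm1}, the Hölder step and these two identities yields (\ref{c1}).

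None of these steps is a real obstacle. The only point needing care is the Hölder-type inequality for the trace norm of a product, and the variational characterization of the trace norm handles it.
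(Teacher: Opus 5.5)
Your proposal is correct and follows the paper's proof essentially step for step. Like the paper, you set $A=(P^TP+\nu\nu^T)^{1/2}$ and $B=(Q^TQ+\mu\mu^T)^{1/2}$, apply $\|AB\|_{\mathrm{tr}}\le\|A\|_{\mathrm{SH}}\|B\|_{\mathrm{SH}}$, and use $\|A\|_{\mathrm{SH}}^2=\operatorname{Tr}(A^2)=\|P\|_{\mathrm{SH}}^2+|\nu|^2$ (and likewise for $B$). The only addition is your short derivation of the H\"older-type inequality from the variational characterization of the trace norm, which the paper simply quotes.
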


\begin{proof}
Let $X=P^TP+\nu\nu^T$ and $Y=Q^TQ+\mu\mu^T$, which are positive semi-definite. Using the  inequality $\|AB\|_{\mathrm{tr}}\leq\|A\|_{\text{SH}}\|B\|_{\text{SH}}$, we have
\begin{equation*}
	\mathfrak{M}(P,Q,\mu,\nu)=\big\|X^{1/2}Y^{1/2}\big\|_{\mathrm{tr}}
	\leq\big\|X^{1/2}\big\|_{\text{SH}}\big\|Y^{1/2}\big\|_{\text{SH}},
\end{equation*} where $\|A\|_{\text{SH}}=\sqrt{Tr(A^{\dagger}A)}$ represents the Schmidt–Hilbert norm of $A$.
Since $X$ is positive semi-definite, $\|X^{1/2}\|_{\text{SH}}^2=\operatorname{Tr}(X)=\|P\|_{\text{SH}}^2+|\nu|^2$, and similarly $\|Y^{1/2}\|_{\text{SH}}^2=\|Q\|_{\text{SH}}^2+|\mu|^2$, which gives directly to (\ref{c1}).
\end{proof}

\begin{example}\label{ex:1}
Let
\begin{equation*}
	\rho_{a}=\frac{1}{1+7a}
	\begin{pmatrix}
		a & 0 & 0 & 0 & 0 & a & 0 & 0 \\
		0 & a & 0 & 0 & 0 & 0 & a & 0 \\
		0 & 0 & a & 0 & 0 & 0 & 0 & a \\
		0 & 0 & 0 & a & 0 & 0 & 0 & 0 \\
		0 & 0 & 0 & 0 & \frac{1+a}{2} & 0 & 0 & \frac{\sqrt{1-a^2}}{2} \\
		a & 0 & 0 & 0 & 0 & a & 0 & 0\\
		0 & a & 0 & 0 & 0 & 0 & a & 0\\
		0 & 0 & a & 0 & \frac{\sqrt{1-a^2}}{2} & 0 & 0 & \frac{1+a}{2} \\
	\end{pmatrix},\quad 0 < a < 1.
\end{equation*}
Consider the state $\rho_{x}=\left(1-x\right)\rho_{a}+x\left|\xi \right\rangle\left\langle\xi\right|$, where $\left|\xi\right\rangle=\frac{1}{\sqrt{2}}(|00\rangle+|11\rangle)$.

Set $a=0.9$ and take
\begin{equation*}
	P=
	\begin{pmatrix}
		9.28 & 10.51 \\
		10.44 & 10.65
	\end{pmatrix},~~~
	Q=
	\begin{pmatrix}
		10.01 & 12.75 \\
		13.03 & 13.54
	\end{pmatrix},~~~
	\mu=
	\begin{pmatrix}
		\frac{1}{\sqrt{2}} \\[6pt]
		\frac{1}{\sqrt{2}}
	\end{pmatrix},~~~
	\nu=
	\begin{pmatrix}
		\frac{1}{\sqrt{2}} \\[6pt]
		\frac{1}{\sqrt{2}}
	\end{pmatrix}.
\end{equation*}
By direct calculation we get from Theorem \ref{thm1} that $\rho_x$ is entangled for $0.223231 \leq x \leq 1$. From Theorem 1 in Ref.~\cite{LuQIP134}, taking $\boldsymbol{a}=(11.9967,12.9195,11.6808,12.1705,11.4476)^T$ and $\boldsymbol{b}=(12.5025,11.5102,12.0119,12.3982,12.7818)^T$ in the matrix (\ref{realignment03}), one has that $\rho_x$ is entangled for $0.232959 \leq x \leq 1$. For $\alpha=10$, $\beta=12$, and $l=5$ in the matrix (\ref{realignment02}), Theorem 1 in Ref.~\cite{SunPRA012415} gives rise to that $\rho_x$ is entangled for $0.223267 \leq x \leq 1$. Taking $\alpha=10$ and $\beta=12$ in the matrix (\ref{realignment01}), from Theorem 1 in Ref.~\cite{Shi2023family}, one has that $\rho_x$ is entangled for $0.223276 \leq x \leq 1$. The original realignment criterion of Ref.~\cite{Chen2003matrix} only detects entanglement for $0.252746 \leq x \leq 1$. It is seen that our theorem detects not only the entangled states identifiable by the related existing criteria, but also entangled states that those criteria fail to detect. These results are presented in Table~\ref{tabeg1} as follows.

\begin{table}[htbp]
  \centering
  \caption{Comparison of entanglement detection ranges of $\rho_x$ by different entanglement criteria}
  \label{tabeg1}
  \begin{tabular}{lc}
    \toprule
    Criterion & Range of $x$ for detected entanglement \\
    \midrule
    Theorem \ref{thm1} & $0.223231 \leq x \leq 1$ \\
    Theorem 1 in Ref.~\cite{LuQIP134} & $0.232959 \leq x \leq 1$ \\
    Theorem 1 in Ref.~\cite{SunPRA012415} & $0.223267 \leq x \leq 1$ \\
    Theorem 1 in Ref.~\cite{Shi2023family} & $0.223276 \leq x \leq 1$ \\
    Realignment criterion in Ref.~\cite{Chen2003matrix} & $0.252746 \leq x \leq 1$ \\
    \bottomrule
  \end{tabular}
\end{table}
\end{example}

\begin{example}\label{ex:2}
Consider a mixture of the bound entangled state \cite{HorodeckiPLA333}
\begin{equation*}
	\rho_{t}=\frac{1}{1+8t}
	\begin{pmatrix}
		t & 0 & 0 & 0 & t & 0 & 0 & 0 & t \\
		0 & t & 0 & 0 & 0 & 0 & 0 & 0 & 0 \\
		0 & 0 & t & 0 & 0 & 0 & 0 & 0 & 0 \\
		0 & 0 & 0 & t & 0 & 0 & 0 & 0 & 0 \\
		t & 0 & 0 & 0 & t & 0 & 0 & 0 &  t \\
		0 & 0 & 0 & 0 & 0 & t & 0 & 0 & 0 \\
		0 & 0 & 0 & 0 & 0 & 0 & \frac{1+t}{2} & 0 & \frac{\sqrt{1-t^2}}{2}\\
		0 & 0 & 0 & 0 & 0 & 0 & 0 & t & 0\\
		t & 0 & 0 & 0 & t & 0 & \frac{\sqrt{1-t^2}}{2} & 0 & \frac{1+t}{2}\\
	\end{pmatrix}
\end{equation*}
and the $9\times9$ identity matrix $I_9$,
\begin{equation*}
\rho_{p}=\frac{1-p}{9} I_{9} + p\rho_{t}.
\end{equation*}
We take
\begin{equation*}
P=
\begin{pmatrix}
3.511157 & 1.621537 \\
0.8302 & 2.934043 \\
1.626937 & 2.083113 \\
2.058475 & 2.54207 \\
2.053231 & 1.890969 \\
2.197353 & 3.218868 \\
\end{pmatrix},~~~
Q=
\begin{pmatrix}
3.317891 & 1.579999 \\
1.533877 & 1.714491 \\
2.860596 & 3.503382 \\
2.042115 & 3.078828 \\
0.519943 & 3.29257 \\
1.828804 & 0.217855
\end{pmatrix},~~~
\mu=
\begin{pmatrix}
\frac{1}{\sqrt{2}} \\[6pt]
\frac{1}{\sqrt{2}}
\end{pmatrix},~~~
\nu=
\begin{pmatrix}
\frac{1}{\sqrt{2}} \\[6pt]
\frac{1}{\sqrt{2}}
\end{pmatrix}.
\end{equation*}
Table \ref{tabeg2} compares the results from our Theorem \ref{thm1} and the corresponding ones from Refs.~\cite{Chen2003matrix,Shi2023family,SunPRA012415,LuQIP134} for different values of $t$. From Table \ref{tabeg2}, it is seen that our Theorem \ref{thm1} detects more entangled states than other criteria.

\begin{table}[htbp]
\caption{Ranges of $p$ for which the different criteria detect entanglement of $\rho_p$ for selected values of $t$}
\label{tabeg2}
\begin{tabular}{cccccc}
\toprule
$t$ & Theorem 1 in Ref.~\cite{Chen2003matrix} & Theorem 1 in Ref.~\cite{Shi2023family} & Theorem 1 in Ref.~\cite{SunPRA012415}  & Theorem 1 in Ref.~\cite{LuQIP134} & Our Theorem \ref{thm1} \\
\midrule
0.2 & $0.9955 \leq p \leq 1$ & $0.9943 \leq p \leq 1$ & $0.99408  \leq p \leq 1$ & $0.99406  \leq p \leq 1$ & $0.994043 \leq p \leq 1$ \\
0.4 & $0.9960 \leq p \leq 1$ & $0.9948  \leq p \leq 1$ & $0.99463  \leq p \leq 1$ & $0.99462  \leq p \leq 1$ & $0.994607 \leq p \leq 1$ \\	
0.6 & $0.9972 \leq p \leq 1$ & $0.9964  \leq p \leq 1$ & $0.99627  \leq p \leq 1$ & $0.99626 \leq p \leq 1$ & $0.996251 \leq p \leq 1$ \\	
0.8 & $0.9986 \leq p \leq 1$ & $0.9982  \leq p \leq 1$ & $0.998127  \leq p \leq 1$ & $0.998125  \leq p \leq 1$ & $0.9981233 \leq p \leq 1$ \\	
0.9 & $0.9993 \leq p \leq 1$ & $0.9991  \leq p \leq 1$ & $0.99906893  \leq p \leq 1$ & $0.99906843  \leq p \leq 1$ & $0.999067726 \leq p \leq 1$ \\
\bottomrule
\end{tabular}
\end{table}

Here, $\alpha=2$ and $\beta=2$ in the third column correspond to the matrix (\ref{realignment01}); $\alpha=2$, $\beta=2$, and $l=10$ in the fourth column correspond to the matrix (\ref{realignment02}); $\mathbf{a}=\left(\frac{17}{8},2,2,2,2,2,2,2,2,2\right)^T$ and $\mathbf{b}=\left(\frac{21}{10},2,2,2,2,2,2,2,2,2\right)^T$ in the fifth column correspond to the matrix (\ref{realignment03}).
\end{example}

\subsection{Lower bounds of concurrence and  convex-roof extended negativity (CREN)}

The concurrence of a pure state \(|\phi\rangle\in\mathcal{H}_A \otimes \mathcal{H}_B\) is defined as \cite{RungtaPRA042315}
\begin{equation}\label{concurrence}
C(|\phi\rangle)=\sqrt{2\big[1-\operatorname{Tr}(\rho_A^2)\big]},
\end{equation}
where \(\rho_A=\operatorname{Tr}_B\left(|\phi\rangle\langle\phi|\right)\) is the reduced density matrix of subsystem $A$. The concurrence of a mixed state \(\rho\) is defined via the convex-roof extension,
\begin{equation}
C(\rho)=\min_{\{p_k,|\phi_k\rangle\}}\sum_k p_k C(|\phi_k\rangle),
\end{equation}
where the minimization is performed over all ensemble decompositions \(\rho=\sum\limits_k p_k|\phi_k\rangle\langle\phi_k|\) with \(p_k\ge0\) and \(\sum\limits_k p_k=1\).

The CREN of a pure state $|\phi\rangle\in\mathcal{H}_A \otimes \mathcal{H}_B$ is defined by \cite{LeePRA062304}
\begin{equation}
N(|\phi\rangle)=\frac{\big\|T_B(|\phi\rangle\langle\phi|)\big\|_{\mathrm{tr}}-1}{d-1},
\end{equation}
where $d=\min\{d_A,d_B\}$ and $T_B$ denotes the partial transpose with respect to the subsystem $B$. The CREN of a mixed state $\rho$ is defined via the convex roof extension,
\begin{equation}
N(\rho)=\min_{\{p_k,|\phi_k\rangle\}}\sum_k p_k N(|\phi_k\rangle),
\end{equation}
where the minimum is taken over all possible pure state decompositions of $\rho$.

To derive lower bounds of concurrence and CREN, we introduce the following lemma, with its proof provided in Appendix \ref{appendixB}.

\begin{lemma}\label{lem2}
Let $|\phi\rangle$ be a pure bipartite state with Schmidt decomposition $|\phi\rangle=\sum\limits_{k=0}^{d-1}\sqrt{\lambda_k}|k_A k_B\rangle$, where $d=\min\{d_A, d_B\}$. We have 
\begin{enumerate}[label=(\arabic*)]
	\item $\displaystyle\big\|\mathcal{Q}_{P,Q,\mu,\nu}(|\phi\rangle\langle\phi|)\big\|_{\mathrm{tr}} \leq\mathfrak{M}(P, Q, \mu, \nu) + 2\sum_{0 \leq i < j \leq d-1}\sqrt{\lambda_i \lambda_j}$;
	\item $\displaystyle\big\|\mathcal{Q}_{P,Q,\mu,\nu}(|\phi\rangle\langle\phi|)\big\|_{\mathrm{tr}}\leq\mathfrak{M}(P,Q,\mu,\nu)+(d-1)$.
\end{enumerate}
\end{lemma}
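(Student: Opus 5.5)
By Lemma \ref{lem1}(2) we may assume $|\phi\rangle$ is in Schmidt form, so $|\phi\rangle\langle\phi|=\sum_{i,j}\sqrt{\lambda_i\lambda_j}\,|i\rangle\langle j|\otimes|i\rangle\langle j|$. The plan is to split $\mathcal{Q}_{P,Q,\mu,\nu}(|\phi\rangle\langle\phi|)$ into a part that looks like a product-state contribution, bounded by $\mathfrak{M}$, and an off-diagonal remainder, bounded by $2\sum_{i<j}\sqrt{\lambda_i\lambda_j}$.

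The reductions are $\rho_A=\sum_k\lambda_k|k\rangle\langle k|$ and $\rho_B=\sum_k\lambda_k|k\rangle\langle k|$. Realignment sends $X\otimes Y$ to $\operatorname{vec}(X)\operatorname{vec}(Y)^T$, so
\begin{equation*}
\mathcal{R}(|\phi\rangle\langle\phi|)=\sum_{i,j}\sqrt{\lambda_i\lambda_j}\operatorname{vec}(|i\rangle\langle j|)\operatorname{vec}(|i\rangle\langle j|)^T .
\end{equation*}
I split this as $D+O$, where $D=\sum_k\lambda_k\,e_{kk}e_{kk}^T$ collects the $i=j$ terms and $e_{kk}=\operatorname{vec}(|k\rangle\langle k|)$. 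Define
\begin{equation*}
\mathcal{Q}_0=\begin{pmatrix}PQ^T & P\mu\operatorname{vec}(\rho_B)^T\\ \operatorname{vec}(\rho_A)\nu^TQ^T & D\end{pmatrix}=\sum_k\lambda_k\,\mathcal{Q}_{P,Q,\mu,\nu}\big(|kk\rangle\langle kk|\big).
\end{equation*}
This identity holds by linearity (Lemma \ref{lem1}(1)), using $\sum_k\lambda_k=1$. The point is that the border blocks of $|\phi\rangle\langle\phi|$ depend only on the reduced states, and these coincide with those of the mixture $\sum_k\lambda_k|kk\rangle\langle kk|$.

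The computation in the proof of Theorem \ref{thm1} gives $\|\mathcal{Q}_{P,Q,\mu,\nu}(|kk\rangle\langle kk|)\|_{\mathrm{tr}}=\mathfrak{M}(P,Q,\mu,\nu)$, since $|k\rangle\langle k|$ is pure. The triangle inequality then gives $\|\mathcal{Q}_0\|_{\mathrm{tr}}\le\mathfrak{M}$.

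The remainder is $\mathcal{Q}_{P,Q,\mu,\nu}(|\phi\rangle\langle\phi|)-\mathcal{Q}_0=0\oplus O$, with
\begin{equation*}
O=\sum_{i\ne j}\sqrt{\lambda_i\lambda_j}\,e_{ij}e_{ij}^T,\qquad e_{ij}=\operatorname{vec}(|i\rangle\langle j|).
\end{equation*}
The vectors $e_{ij}$ are orthonormal standard basis vectors, so $O$ is diagonal in that basis with singular values $\sqrt{\lambda_i\lambda_j}$ for $i\neq j$. Hence $\|O\|_{\mathrm{tr}}=2\sum_{i<j}\sqrt{\lambda_i\lambda_j}$. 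One more triangle inequality proves (1).

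For (2), I bound $2\sum_{i<j}\sqrt{\lambda_i\lambda_j}=\big(\sum_k\sqrt{\lambda_k}\big)^2-1$ by Cauchy--Schwarz: $\big(\sum_k\sqrt{\lambda_k}\big)^2\le d\sum_k\lambda_k=d$, so the term is at most $d-1$.

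The main care point is the decomposition step: checking that the border blocks of $\mathcal{Q}_0$ really match those of a convex combination of product pure states, so that Theorem \ref{thm1}'s computation applies. The index conventions of $\mathcal{R}$ and $\operatorname{vec}$ also need checking, to confirm that $\mathcal{R}(X\otimes Y)=\operatorname{vec}(X)\operatorname{vec}(Y)^T$ with $\operatorname{vec}$ taken column-wise as defined. With the paper's block ordering this holds, possibly up to fixed permutations that do not change the trace norm.
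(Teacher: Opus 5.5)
Your proof is correct and follows essentially the same route as the paper. Both arguments reduce to Schmidt form via local unitary invariance and split the realigned matrix into its diagonal part and its off-diagonal part. The diagonal part gives $\mathcal{Q}_{P,Q,\mu,\nu}$ of the separable state $\sum_k\lambda_k|kk\rangle\langle kk|$, bounded by Theorem~\ref{thm1}, and the off-diagonal part has trace norm $2\sum_{i<j}\sqrt{\lambda_i\lambda_j}$; (2) then follows by Cauchy--Schwarz. The one small difference is that you combine the two pieces with the triangle inequality, whereas the paper asserts additivity of their trace norms without comment (true because the pieces have disjoint row and column supports), so your version needs slightly less checking.
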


According to the lemma above, we obtain the following theorems.

\begin{theorem}\label{thm2}
The concurrence of a bipartite state $\rho\in\mathcal{H}_A\otimes\mathcal{H}_B$ satisfies that
\begin{equation}
C(\rho)\geq\frac{\sqrt{2}}{\sqrt{(d-1)d\,}}\Big(\big\|\mathcal{Q}_{P,Q,\mu,\nu}
(\rho)\big\|_{\mathrm{tr}}-\mathfrak{M}(P,Q,\mu,\nu)\Big),
\end{equation}
where $d=\min\{d_A,d_B\}$.
\end{theorem}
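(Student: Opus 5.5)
The proof follows the standard route of Chen--Albeverio--Fei: first bound $\|\mathcal{Q}_{P,Q,\mu,\nu}\|_{\mathrm{tr}}-\mathfrak{M}$ for pure states using Lemma~\ref{lem2}(1), then pass to mixed states by convexity (Lemma~\ref{lem1}(1) together with the triangle inequality).

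\textbf{Pure states.} Let $|\phi\rangle=\sum_{k}\sqrt{\lambda_k}|k_Ak_B\rangle$. Then $\operatorname{Tr}(\rho_A^2)=\sum_k\lambda_k^2$, so
\begin{equation*}
C(|\phi\rangle)^2=2\Big(1-\sum_k\lambda_k^2\Big)=2\Big[\Big(\sum_k\lambda_k\Big)^2-\sum_k\lambda_k^2\Big]=4\sum_{i<j}\lambda_i\lambda_j .
\end{equation*}
The sum $\sum_{i<j}\sqrt{\lambda_i\lambda_j}$ has $d(d-1)/2$ terms, so Cauchy--Schwarz gives
\begin{equation*}
\Big(2\sum_{i<j}\sqrt{\lambda_i\lambda_j}\Big)^2\le 4\cdot\frac{d(d-1)}{2}\sum_{i<j}\lambda_i\lambda_j=\frac{d(d-1)}{2}\,C(|\phi\rangle)^2 .
\end{equation*}
Combining this with Lemma~\ref{lem2}(1) yields
\begin{equation*}
\big\|\mathcal{Q}_{P,Q,\mu,\nu}(|\phi\rangle\langle\phi|)\big\|_{\mathrm{tr}}-\mathfrak{M}(P,Q,\mu,\nu)\le\sqrt{\frac{d(d-1)}{2}}\;C(|\phi\rangle),
\end{equation*}
which is the claimed inequality for pure states.

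\textbf{Mixed states.} Take an optimal decomposition $\rho=\sum_kp_k|\phi_k\rangle\langle\phi_k|$ attaining $C(\rho)$. The minimum exists by compactness; alternatively, use an $\epsilon$-optimal decomposition and let $\epsilon\to0$. By Lemma~\ref{lem1}(1) and the triangle inequality,
\begin{equation*}
\|\mathcal{Q}_{P,Q,\mu,\nu}(\rho)\|_{\mathrm{tr}}\le\sum_kp_k\|\mathcal{Q}_{P,Q,\mu,\nu}(|\phi_k\rangle\langle\phi_k|)\|_{\mathrm{tr}}\le\mathfrak{M}(P,Q,\mu,\nu)+\sqrt{\frac{d(d-1)}{2}}\sum_kp_kC(|\phi_k\rangle).
\end{equation*}
Here we used $\sum_kp_k=1$ to pull out $\mathfrak{M}$. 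The last sum equals $C(\rho)$, and rearranging gives the theorem.

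\textbf{Where the work lies.} Granting Lemma~\ref{lem2}, the only nontrivial step is the Cauchy--Schwarz estimate relating $\sum_{i<j}\sqrt{\lambda_i\lambda_j}$ to the concurrence, and it is straightforward. The substantive content is Lemma~\ref{lem2}(1) itself, which is assumed here. If it had to be proved, the plan would be:
\begin{enumerate}[label=(\roman*)]
\item By Lemma~\ref{lem1}(2), reduce to the Schmidt basis.
\item Write $|\phi\rangle\langle\phi|=\sum_{i,j}\sqrt{\lambda_i\lambda_j}\,|i\rangle\langle j|\otimes|i\rangle\langle j|$.
\item Separate the diagonal part from the off-diagonal part. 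The diagonal part has the block structure of Theorem~\ref{thm1}, giving the $\mathfrak{M}$ term. Each off-diagonal term $|i\rangle\langle j|\otimes|i\rangle\langle j|$ realigns to a rank-one matrix with unit trace norm and has zero reduced-state contribution, giving the remaining $2\sum_{i<j}\sqrt{\lambda_i\lambda_j}$.
\end{enumerate}
The delicate point in step (iii) is that the diagonal part has reduced states $\operatorname{vec}(\rho_A)$ and $\operatorname{vec}(\rho_B)$ entering with weight $\sum_k\lambda_k$, so its bound must be handled via convexity over the product states $|kk\rangle$.
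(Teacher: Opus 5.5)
Your proof is correct and follows the paper's route: the pure-state bound from Lemma~\ref{lem2}(1) combined with the Chen--Albeverio--Fei inequality $C(|\phi\rangle)^2\geq\frac{8}{d(d-1)}\big(\sum_{i<j}\sqrt{\lambda_i\lambda_j}\big)^2$, then convexity over an optimal decomposition via Lemma~\ref{lem1}(1) and the triangle inequality. The only difference is that you derive that inequality yourself by Cauchy--Schwarz, whereas the paper cites it.
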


\begin{proof}
For any pure state $|\phi\rangle=\sum\limits_{k=0}^{d-1}\sqrt{\lambda_k}|k_A k_B\rangle$, it is known that \cite{ChenAlbeverioFei}
\begin{equation*}
	C(|\phi\rangle)^2\geq\frac{8}{(d-1)d\,}\left(\sum_{0\leq i<j\leq d-1}\sqrt{\lambda_i\lambda_j}\right)^2.
\end{equation*}
From Lemma \ref{lem2}(i), $\sum\limits_{i<j}\sqrt{\lambda_i\lambda_{j}}\geq\frac{1}{2}
\big(\|\mathcal{Q}_{P,Q,\mu,\nu}(|\phi\rangle\langle\phi|)\|_{\mathrm{tr}}-\mathfrak{M}(P,Q,\mu,\nu)\big)$. Hence,
\begin{equation*}
C(|\phi\rangle)\geq\frac{2\sqrt2}{\sqrt{(d-1)d\,}}\sum_{i<j}\sqrt{\lambda_i\lambda_j}
\geq\frac{\sqrt2}{\sqrt{(d-1)d\,}}\Big(\big\|\mathcal{Q}_{P,Q,\mu,\nu}(|\phi\rangle\langle\phi|)\big\|_{\mathrm{tr}}-\mathfrak{M}(P,Q,\mu,\nu)\Big).
\end{equation*}

Let $\{p_k,|\phi_k\rangle\}$ be the optimal pure-state decomposition of $\rho$ such that $C(\rho)=\sum\limits_k p_k C(|\phi_k\rangle)$. Using Lemma \ref{lem1}(1) and the triangle inequality, we have
\begin{equation*}
\begin{split}
C(\rho)=\sum_k p_k C(|\phi_k\rangle)
&\geq\frac{\sqrt2}{\sqrt{(d-1)d\,}}\sum_k p_k\Big(\|\mathcal{Q}_{P,Q,\mu,\nu}(|\phi_k\rangle\langle\phi_k|)\|_{\mathrm{tr}}-\mathfrak{M}(P,Q,\mu,\nu)\Big)\\&\geq\frac{\sqrt2}{\sqrt{(d-1)d\,}}\Big(\big\|\mathcal{Q}_{P,Q,\mu,\nu}(\rho)\big\|_{\mathrm{tr}}-\mathfrak{M}(P,Q,\mu,\nu)\Big),
\end{split}
\end{equation*}
which completes the proof.
\end{proof}

\begin{theorem}\label{thm3}
For the CREN of any state $\rho\in\mathcal{H}_A\otimes\mathcal{H}_B$, we have
\begin{equation}
N(\rho)\geq\frac{\big\|\mathcal{Q}_{P,Q,\mu,\nu}(\rho)\big\|_{\mathrm{tr}}-\mathfrak{M}(P,Q,\mu,\nu)}{d-1},
\end{equation}
where $d=\min\{d_A,d_B\}$.
\end{theorem}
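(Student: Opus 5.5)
The plan follows the proof of Theorem \ref{thm2}: first establish a pure-state bound, then pass to mixed states through the convex roof.

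\textbf{Pure states.} Let $|\phi\rangle=\sum_{k=0}^{d-1}\sqrt{\lambda_k}|k_Ak_B\rangle$ be a Schmidt decomposition. We use the standard identity
\begin{equation*}
\big\|T_B(|\phi\rangle\langle\phi|)\big\|_{\mathrm{tr}}=\Big(\sum_k\sqrt{\lambda_k}\Big)^2=1+2\sum_{0\le i<j\le d-1}\sqrt{\lambda_i\lambda_j}.
\end{equation*}
To check it, note that the partial transpose of $|\phi\rangle\langle\phi|$ in the Schmidt basis has diagonal eigenvalues $\lambda_k$. For each pair $i<j$ it has a $2\times2$ block on $\{|ij\rangle,|ji\rangle\}$ with eigenvalues $\pm\sqrt{\lambda_i\lambda_j}$. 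Hence
\begin{equation*}
N(|\phi\rangle)=\frac{2\sum_{i<j}\sqrt{\lambda_i\lambda_j}}{d-1}.
\end{equation*}
Lemma \ref{lem2}(1) states that $2\sum_{i<j}\sqrt{\lambda_i\lambda_j}\ge\|\mathcal{Q}_{P,Q,\mu,\nu}(|\phi\rangle\langle\phi|)\|_{\mathrm{tr}}-\mathfrak{M}(P,Q,\mu,\nu)$. Combining the two gives the pure-state bound
\begin{equation*}
N(|\phi\rangle)\ge\frac{\|\mathcal{Q}_{P,Q,\mu,\nu}(|\phi\rangle\langle\phi|)\|_{\mathrm{tr}}-\mathfrak{M}(P,Q,\mu,\nu)}{d-1}.
\end{equation*}

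\textbf{Mixed states.} Take an optimal decomposition $\rho=\sum_kp_k|\phi_k\rangle\langle\phi_k|$ achieving $N(\rho)$. If the minimum is not attained, use a near-optimal decomposition and take the limit. Averaging the pure-state bound over the ensemble gives
\begin{equation*}
N(\rho)\ge\frac{1}{d-1}\Big(\sum_kp_k\|\mathcal{Q}_{P,Q,\mu,\nu}(|\phi_k\rangle\langle\phi_k|)\|_{\mathrm{tr}}-\mathfrak{M}(P,Q,\mu,\nu)\Big),
\end{equation*}
using $\sum_kp_k=1$. By Lemma \ref{lem1}(1) and the triangle inequality, $\sum_kp_k\|\mathcal{Q}_{P,Q,\mu,\nu}(|\phi_k\rangle\langle\phi_k|)\|_{\mathrm{tr}}\ge\|\mathcal{Q}_{P,Q,\mu,\nu}(\rho)\|_{\mathrm{tr}}$, which yields the claim.

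The only nonroutine ingredient is the trace-norm formula for the partial transpose of a pure state; the argument above settles it. The rest reuses Lemma \ref{lem2} exactly as in Theorem \ref{thm2}, and no case split is needed.
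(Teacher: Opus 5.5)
Your proof is correct and follows the paper's argument: the pure-state formula $N(|\phi\rangle)=\frac{2}{d-1}\sum_{i<j}\sqrt{\lambda_i\lambda_j}$ combined with Lemma \ref{lem2}(1), then averaging over an optimal decomposition and applying Lemma \ref{lem1}(1) with the triangle inequality. The only differences are minor: you verify the partial-transpose trace-norm identity directly where the paper cites it, and your near-optimal-decomposition remark is harmless (the convex-roof minimum is attained in finite dimensions anyway).
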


\begin{proof}
For any pure state $|\phi\rangle$ with Schmidt coefficients $\lambda_k$, one has \cite{LeePRA062304}
\begin{equation*}
	N(|\phi\rangle)=\frac{2}{d-1}\sum_{0\leq i<j\leq d-1}\sqrt{\lambda_i\lambda_j}.
\end{equation*}
From Lemma \ref{lem2}(i), we can get
\begin{equation*}
	N(|\phi_k\rangle\langle\phi_k|)\geq\frac{\big\|\mathcal{Q}_{P,Q,\mu,\nu}(|\phi\rangle\langle\phi|)\big\|_{\mathrm{tr}}-\mathfrak{M}(P,Q,\mu,\nu)}{d-1}.
\end{equation*}
Let $\{p_k,|\phi_k\rangle\}$ be the optimal decomposition of $\rho$ such that $N(\rho)=\sum\limits_k p_k N(|\phi_k\rangle)$. Using Lemma \ref{lem1}(1) and the triangle inequality, we obtain
\begin{equation*}
	N(\rho)=\sum_k p_k N(|\phi_k\rangle)\geq \sum_k p_k\frac{\big\|\mathcal{Q}_{P,Q,\mu,\nu}(|\phi\rangle_k\langle_k\phi|)\big\|_{\mathrm{tr}}-\mathfrak{M}(P,Q,\mu,\nu)}{d-1}\geq\frac{\big\|\mathcal{Q}_{P,Q,\mu,\nu}(\rho)\big\|_{\mathrm{tr}}-\mathfrak{M}(P,Q,\mu,\nu)}{d-1}.
\end{equation*}
which completes the proof.
\end{proof}

\begin{example}
The following $3 \times 3$ PPT entangled state was introduced in Ref. \cite{BennettPRL5385},
\begin{equation*}
\rho=\frac{1}{4}\left(I_{9}-\sum_{k=0}^{4}
\left|\phi_{k}\right\rangle\left\langle\phi_{k}\right|\right),
\end{equation*}
where
\begin{equation*}
	\begin{aligned}
		& \left|\phi_{0}\right\rangle=|0\rangle\otimes\frac{1}{\sqrt{2}}(|0\rangle-|1\rangle), \\
		& \left|\phi_{1}\right\rangle=\frac{1}{\sqrt{2}}(|0\rangle-|1\rangle)\otimes|2\rangle, \\
		& \left|\phi_{2}\right\rangle=|2\rangle\otimes\frac{1}{\sqrt{2}}(|1\rangle-|2\rangle), \\
		& \left|\phi_{3}\right\rangle=\frac{1}{\sqrt{2}}(|1\rangle-|2\rangle)\otimes|0\rangle, \\
		& \left|\phi_{4}\right\rangle=\frac{1}{\sqrt{3}}(|0\rangle+|1\rangle+|2\rangle)\otimes\frac{1}{\sqrt{3}}(|0\rangle+|1\rangle+|2\rangle).
	\end{aligned}
\end{equation*}
We consider the mixture of $\rho$ with white noise,
\begin{equation*}
\rho_y = \frac{1-y}{9}I_9 + y\rho.
\end{equation*}

For concurrence, we take
\begin{equation*}
P=
\begin{pmatrix}
0.23 & 1.7 & 1.61 \\
1.57 & 2.65 & 0.85 \\
1.74 & 0.82 & 1.7 \\
1.43 & 1.8 &  -1.2
\end{pmatrix},\quad
Q=
\begin{pmatrix}
1.71 & 1.19 & 1.27 \\
1.53 & 1.93 & 1.68 \\
0.58 & 1.5 & 0.8 \\
0.54 & 1.33 & 1.2
\end{pmatrix},\quad
\mu=
\begin{pmatrix}
\frac{1}{\sqrt{3}} \\[6pt]
\frac{1}{\sqrt{3}} \\[6pt]
\frac{1}{\sqrt{3}}
\end{pmatrix},\quad
\nu=
\begin{pmatrix}
\frac{1}{\sqrt{3}} \\[6pt]
\frac{1}{\sqrt{3}} \\[6pt]
\frac{1}{\sqrt{3}}
\end{pmatrix}.
\end{equation*}
FIG.\ref{lower bounds of concurrence} shows the lower bound of the concurrence. The solid blue line is the lower bound of concurrence from our Theorem \ref{thm2}. The dashed orange line in FIG.\ref{lower bounds of concurrence} is the lower bound of concurrence from Theorem 2 in Ref.~\cite{LuQIP134} with $\mathbf{a}=\mathbf{b}=(1,1,1.2)^T$ in the matrix (\ref{realignment03}). The dot-dashed green line in FIG.\ref{lower bounds of concurrence} is the lower bound of concurrence from Theorem 2 in Ref.~\cite{SunPRA012415} with $\alpha=\beta=1$ and $l=2$ in the matrix (\ref{realignment02}). The dotted pink line in FIG.\ref{lower bounds of concurrence} is the lower bound of concurrence from Theorem 6 in Ref.~\cite{Shi2023family} with $\alpha=\beta=1$ in the matrix (\ref{realignment01}).  Clearly, our bound in Theorem \ref{thm2} outperforms those given in Refs.~\cite{LuQIP134,SunPRA012415,Shi2023family}.
\begin{figure}[htbp] 
    \centering
    \includegraphics[width=0.7\textwidth]{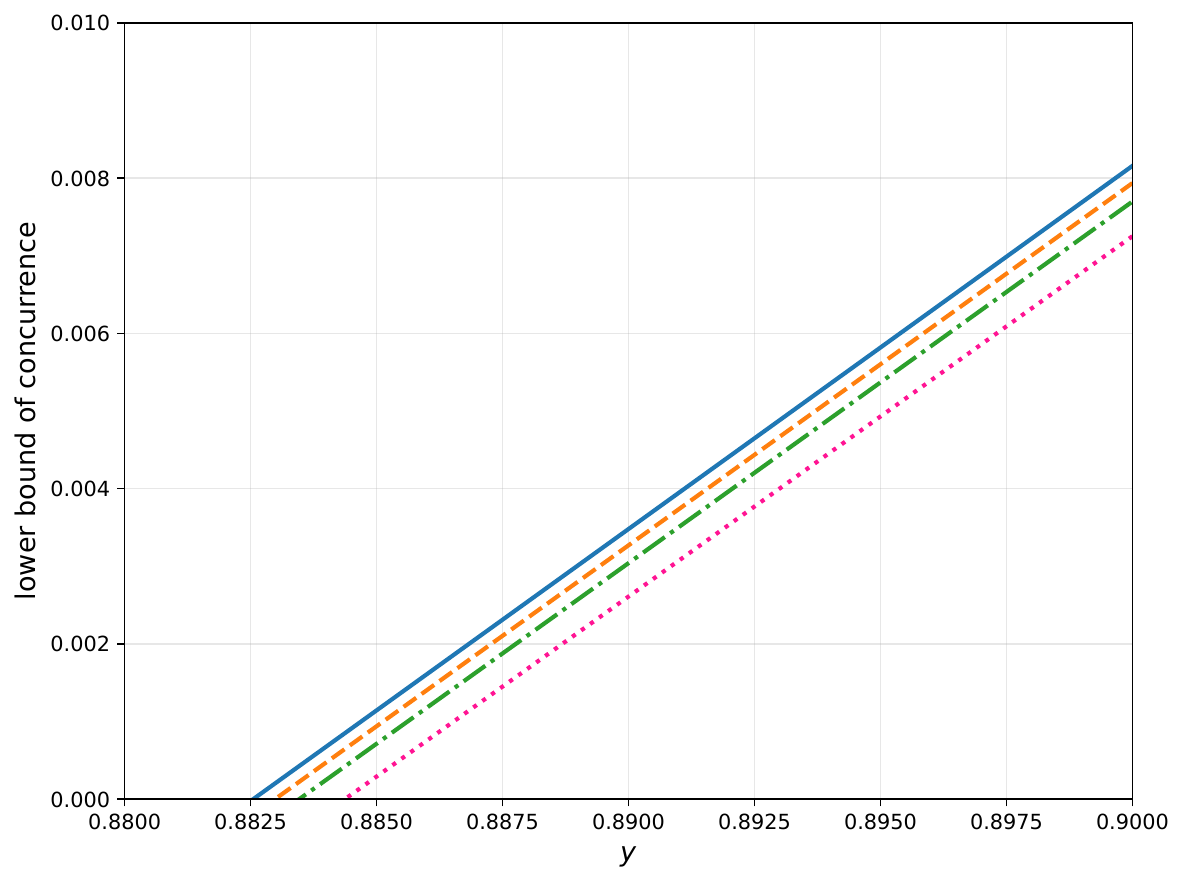}
    \caption{Lower bounds of concurrence for $\rho_y$. The solid blue line is our result from Theorem \ref{thm2}. The dashed orange, dot-dashed green, dotted pink lines stand for the bounds from Theorem 2 in Ref.~\cite{LuQIP134}, Theorem 2 in Ref.~\cite{SunPRA012415}, Theorem 6 in Ref.~\cite{Shi2023family}, respectively.}
    \label{lower bounds of concurrence}
\end{figure}

For CREN, we take
\begin{equation*}
P=
\begin{pmatrix}
0.62 & 1.01 & 1.32 & 2.36 & 0.32 \\
2.22 & 2.32 & 1.15 & 1.28 & 2.8 \\
0.64 & 1.61 & 1.49 & 0.31 & 0.96
\end{pmatrix},\quad
Q=
\begin{pmatrix}
0.31 & 0.07 & 1.73 & 0.48 & 3.41 \\
0.82 & 2 & 2.18 & 2.51 & 1.06 \\
2.87 & 1.49 & 0.19 & 1 & 1.54
\end{pmatrix},\quad
\mu=
\begin{pmatrix}
\frac{1}{\sqrt{5}} \\[6pt]
\frac{1}{\sqrt{5}} \\[6pt]
\frac{1}{\sqrt{5}} \\[6pt]
\frac{1}{\sqrt{5}} \\[6pt]
\frac{1}{\sqrt{5}}
\end{pmatrix},\quad
\nu=
\begin{pmatrix}
\frac{1}{\sqrt{5}} \\[6pt]
\frac{1}{\sqrt{5}} \\[6pt]
\frac{1}{\sqrt{5}} \\[6pt]
\frac{1}{\sqrt{5}} \\[6pt]
\frac{1}{\sqrt{5}}
\end{pmatrix}.
\end{equation*}
The solid blue line in FIG.\ref{lower bounds of CREN} is the lower bound of CREN from our Theorem \ref{thm3}. The dashed orange line in FIG.\ref{lower bounds of CREN} is the lower bound of CREN from Theorem 3 in Ref.~\cite{LuQIP134} with $\mathbf{a}=\mathbf{b}=(1,1,1.2)^T$ in the matrix (\ref{realignment03}). The dot-dashed green line in FIG.\ref{lower bounds of CREN} is the lower bound of CREN from Theorem 3 in Ref.~\cite{SunPRA012415} with $\alpha=\beta=1$ and $l=2$ in the matrix (\ref{realignment02}). The dotted pink line in FIG.\ref{lower bounds of CREN} is the lower bound of CREN from Theorem 7 in Ref.~\cite{Shi2023family} with $\alpha=\beta=1$ in the matrix (\ref{realignment01}). Clearly, our bound in Theorem \ref{thm3} outperforms those given in   Refs.~\cite{LuQIP134,SunPRA012415,Shi2023family}.
\end{example}
\begin{figure}[htbp] 
    \centering
    \includegraphics[width=0.7\textwidth]{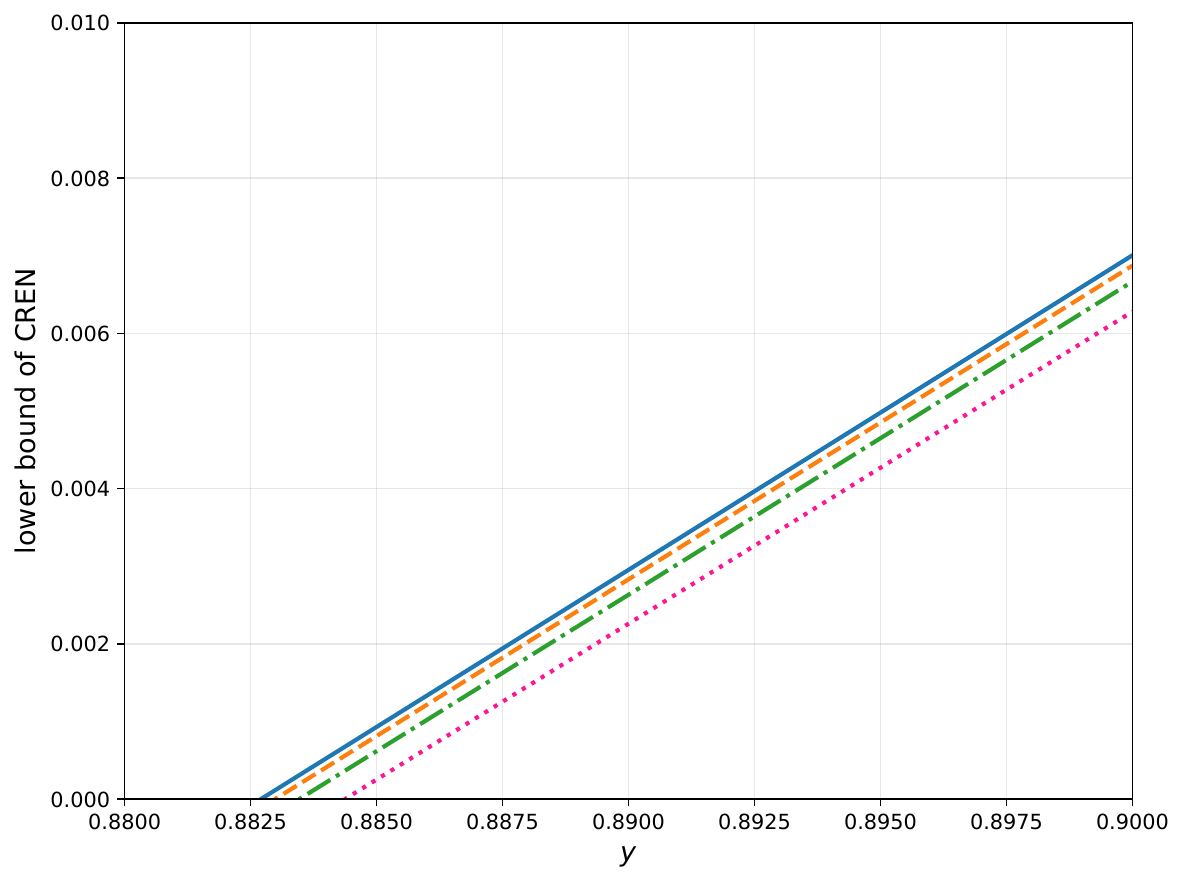}
    \caption{Lower bounds of CREN for $\rho_y$. The solid blue line is our result from Theorem \ref{thm3}. The dashed orange, dot-dashed green, dotted pink lines stand for the bounds from Theorem 3 in Ref.~\cite{LuQIP134}, Theorem 3 in Ref.~\cite{SunPRA012415}, Theorem 7 in Ref.~\cite{Shi2023family}, respectively.}
    \label{lower bounds of CREN}
\end{figure}

\section{Detection and quantification of multipartite entanglement}\label{sectIII}

\subsection{Separability criteria for multipartite quantum states}

We first consider the tripartite case. Denote the three bipartitions of a tripartite state $\rho\in\mathcal{H}_A\otimes\mathcal{H}_B\otimes\mathcal{H}_C$ as $A|BC$, $B|AC$ and $C|AB$. If $\rho$ is biseparable, then it can be written as a convex combination of pure states that are separable under these bipartitions, i.e.
\begin{equation}
	\rho=\sum_i p_i|\phi_i\rangle^{A|BC}\langle\phi_i|+\sum_j p_j|\phi_j\rangle^{B|AC}\langle\phi_j|+\sum_k p_k|\phi_k\rangle^{C|AB}\langle\phi_k|,
\end{equation}
where $p_i, p_j, p_k \geq 0$, with $\sum\limits_i p_i+\sum\limits_j p_j+\sum\limits_k p_k=1$. Otherwise, $\rho$ is called genuinely tripartite entangled. Define
\begin{equation}\label{tripartite concurrence}
	\mathscr{Q}_{P,Q,\mu,\nu}(\rho)=\frac{1}{3}\Big(\big\|\mathcal{Q}_{P,Q,\mu,\nu}^{A|BC}(\rho)\big\|_{\mathrm{tr}}+\big\|\mathcal{Q}_{P,Q,\mu,\nu}^{B|AC}(\rho)\big\|_{\mathrm{tr}}+\big\|\mathcal{Q}_{P,Q,\mu,\nu}^{C|AB}(\rho)\big\|_{\mathrm{tr}}\Big),
\end{equation}
where $\mathcal{Q}^{i|jk}_{P,Q,\mu,\nu}$ stands for the matrix (\ref{realignment matrix}) under the bipartition $i$ and $jk$, $i \neq j \neq k \in \{A,B,C\}$.

\begin{theorem}\label{thm4}
If a tripartite state $\rho\in\mathbb{C}^{d}\otimes\mathbb{C}^{d}\otimes\mathbb{C}^{d}$ is biseparable, then
\begin{equation*}
	\mathscr{Q}_{P,Q,\mu,\nu}(\rho)\leq\mathfrak{M}(P,Q,\mu,\nu)+\frac{2(d-1)}{3}.
\end{equation*}
\end{theorem}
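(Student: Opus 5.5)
By Lemma \ref{lem1}(1), each of the three bipartite maps $\mathcal{Q}^{A|BC}_{P,Q,\mu,\nu}$, $\mathcal{Q}^{B|AC}_{P,Q,\mu,\nu}$ and $\mathcal{Q}^{C|AB}_{P,Q,\mu,\nu}$ is affine in $\rho$. Combined with the triangle inequality for the trace norm, this makes $\mathscr{Q}_{P,Q,\mu,\nu}$ a convex function of $\rho$. Writing a biseparable $\rho$ as in its defining decomposition, we get
\begin{equation*}
\mathscr{Q}_{P,Q,\mu,\nu}(\rho)\leq\sum_i p_i\,\mathscr{Q}_{P,Q,\mu,\nu}(|\phi_i\rangle\langle\phi_i|)+\sum_j p_j\,\mathscr{Q}_{P,Q,\mu,\nu}(|\phi_j\rangle\langle\phi_j|)+\sum_k p_k\,\mathscr{Q}_{P,Q,\mu,\nu}(|\phi_k\rangle\langle\phi_k|).
\end{equation*}
Since the weights sum to one, it suffices to prove $\mathscr{Q}_{P,Q,\mu,\nu}(|\phi\rangle\langle\phi|)\leq\mathfrak{M}(P,Q,\mu,\nu)+\frac{2(d-1)}{3}$ for every pure state that is product across some bipartition. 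By symmetry of the three cases, we take $|\phi\rangle=|\alpha\rangle_A\otimes|\beta\rangle_{BC}$.

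For this state, the $A|BC$ term is handled by Theorem \ref{thm1}, or directly by its proof, since $|\phi\rangle\langle\phi|$ is a product state across $A|BC$. This gives $\|\mathcal{Q}^{A|BC}_{P,Q,\mu,\nu}(|\phi\rangle\langle\phi|)\|_{\mathrm{tr}}\leq\mathfrak{M}(P,Q,\mu,\nu)$. For the $B|AC$ and $C|AB$ terms, we apply Lemma \ref{lem2}(2) to $|\phi\rangle$ viewed as a bipartite pure state across $B|AC$, and likewise across $C|AB$. Across $B|AC$ the smaller local dimension is $\min\{d,d^2\}=d$, so
\begin{equation*}
\|\mathcal{Q}^{B|AC}_{P,Q,\mu,\nu}(|\phi\rangle\langle\phi|)\|_{\mathrm{tr}}\leq\mathfrak{M}(P,Q,\mu,\nu)+(d-1),
\end{equation*}
and the same bound holds for $C|AB$. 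Averaging the three bounds gives
\begin{equation*}
\tfrac{1}{3}\big(3\mathfrak{M}(P,Q,\mu,\nu)+2(d-1)\big)=\mathfrak{M}(P,Q,\mu,\nu)+\tfrac{2(d-1)}{3}.
\end{equation*}
Inserting this into the convex decomposition completes the argument.

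The main obstacle is the dimension bookkeeping. The matrices $P,Q$ must have the correct row counts for each bipartition: $d^2$ versus $d^4$ rows after realignment. The argument therefore has to implicitly fix $P,Q$ of compatible sizes for each cut, so that the same $\mathfrak{M}(P,Q,\mu,\nu)$ appears in all three terms. Since $\mathfrak{M}$ depends only on $P^TP$, $Q^TQ$, $\mu$ and $\nu$, the natural convention is to pad $P,Q$ with zero rows. Zero padding preserves $P^TP$ and $Q^TQ$, so $\mathfrak{M}$ is the same for every bipartition. I would state this convention explicitly and check that Lemma \ref{lem2}(2) applies verbatim to the unequal-dimension cut with $d=\min\{d_A,d_B\}$. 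This is exactly how Lemma \ref{lem2} is phrased, so no new estimate is required.
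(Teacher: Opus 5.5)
Your proof is correct and is essentially the paper's argument: you reduce to pure states that are product across one cut via convexity of $\mathscr{Q}_{P,Q,\mu,\nu}$ (Lemma~\ref{lem1}(1) plus the triangle inequality), bound the product cut by Theorem~\ref{thm1} and the other two cuts by Lemma~\ref{lem2}(2), and average. The zero-padding discussion is harmless but unnecessary: the row counts of $P,Q$ are free parameters not tied to the local dimensions (as the paper's examples show), and in any case all three cuts $i|jk$ have the same shape, $d$ versus $d^2$, so the same $P,Q$ already fit each of them.
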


\begin{proof}
	Let $|\phi\rangle\in\mathbb{C}^d\otimes\mathbb{C}^d\otimes\mathbb{C}^d$ be biseparable under bipartition $A|BC$, i.e., $|\phi\rangle=|\phi_A\rangle\otimes|\phi_{BC}\rangle$. Then $\|\mathcal{Q}_{P,Q,\mu,\nu}^{A|BC}(|\phi\rangle\langle\phi|)\|_{\mathrm{tr}}\leq\mathfrak{M}(P,Q,\mu,\nu)$ by Theorem \ref{thm1}, while Lemma \ref{lem2}(ii) gives $\|\mathcal{Q}_{P,Q,\mu,\nu}^{B|AC}(|\phi\rangle\langle\phi|)\|_{\mathrm{tr}}\leq \mathfrak{M}(P,Q,\mu,\nu)+(d-1)$ and $\|\mathcal{Q}_{P,Q,\mu,\nu}^{C|AB}(|\phi\rangle\langle\phi|)\|_{\mathrm{tr}}\leq\mathfrak{M}(P,Q,\mu,\nu)+(d-1)$. Hence, from Eq.~(\ref{tripartite concurrence}) we have
	\begin{equation*}
		\mathscr{Q}_{P,Q,\mu,\nu}(|\phi\rangle\langle\phi|)\leq\frac{1}{3}\Big(\mathfrak{M}(P,Q,\mu,\nu)+\big(\mathfrak{M}(P,Q,\mu,\nu)+d-1\big)+\big(\mathfrak{M}(P,Q,\mu,\nu)+d-1\big)\Big)=\mathfrak{M}(P,Q,\mu,\nu)+\frac{2(d-1)}{3}.
	\end{equation*}
	The same bound holds for biseparable states under the bipartitions $B|AC$ and $C|AB$. For a biseparable mixed state $\rho=\sum_kp_k|\phi_k\rangle\langle\phi_k|$, we have
\begin{equation*}
	\mathscr{Q}_{P,Q,\mu,\nu}(\rho)\leq\sum_k p_k \mathscr{Q}_{P,Q,\mu,\nu}(|\phi_k\rangle\langle\phi_k|),
\end{equation*}
which implies
\begin{equation*}
	\mathscr{Q}_{P,Q,\mu,\nu}(\rho)\leq\sum_k p_k \Big(\mathfrak{M}(P,Q,\mu,\nu)+\frac{2(d-1)}{3}\Big)=\mathfrak{M}(P,Q,\mu,\nu)+\frac{2(d-1)}{3}.
\end{equation*}
\end{proof}

\begin{example}
Consider the state $\rho_{\operatorname{W}}^{q}$ in $\mathbb{C}^3 \otimes \mathbb{C}^3 \otimes \mathbb{C}^3$,
\begin{equation*}
	\rho_{\operatorname{W}}^{q}=\frac{1-q}{27} I_{27} + q\left|\phi_{\operatorname{W}}\right\rangle\left\langle\phi_{\operatorname{W}}\right|,
\end{equation*}
where $\left|\phi_{\operatorname{W}}\right\rangle=\frac{1}{\sqrt{6}}(|001\rangle+|010\rangle+|100\rangle+|112\rangle+|121\rangle+|211\rangle)$, and $I_{27}$ is the $27 \times 27$ identity matrix. Take
\begin{equation*}
P=
\begin{pmatrix}
10.65 & 11.57 \\
11.88 & 11.94 \\
12.42 & 11.69
\end{pmatrix},\quad
Q=
\begin{pmatrix}
12.45 & 12.81 \\
13.05 & 12.1 \\
11.62 & 12.45
\end{pmatrix},\quad
\mu=
\begin{pmatrix}
\frac{1}{\sqrt{2}} \\[6pt]
\frac{1}{\sqrt{2}}
\end{pmatrix},\quad
\nu=
\begin{pmatrix}
\frac{1}{\sqrt{2}} \\[6pt]
\frac{1}{\sqrt{2}}
\end{pmatrix}.
\end{equation*}
Theorem \ref{thm4} detects the genuine tripartite entanglement of $\rho_{\operatorname{W}}^q$ for $0.804647 \leq q \leq 1$. When $\mathbf{a}=(1,2)^T$ and $\mathbf{b}=(2,1)^T$ in the matrix (\ref{realignment03}), Theorem \ref{thm4} reduces to the Theorem 4 in Ref.~\cite{LuQIP134}, which detects the genuine tripartite entanglement of $\rho_{\operatorname{W}}^q$ for $0.805132 \leq q \leq 1$. When $\alpha=\beta=1$ and $l=2$ in the matrix (\ref{realignment02}), Theorem \ref{thm4} reduces to the Theorem 4 in Ref.~\cite{SunPRA012415}, which detects the genuine tripartite entanglement of $\rho_{\operatorname{W}}^q$ for $0.805211 \leq q \leq 1$. And when $\alpha=\beta=1$ in the matrix (\ref{realignment01}), Theorem \ref{thm4} reduces to the Theorem 2 in Ref.~\cite{Qi2024}, which detects the genuine tripartite entanglement of $\rho_{\operatorname{W}}^q$ for $0.805318 \leq q \leq 1$. Obviously, our Theorem \ref{thm4} is more effective in detecting the genuine tripartite entanglement. These results are presented in Table~\ref{tabeg4} as follows.

\begin{table}[htbp]
  \centering
  \caption{Comparison of genuine tripartite entanglement detection ranges of $\rho_{\operatorname{W}}^q$ under different criteria}
  \label{tabeg4}
  \begin{tabular}{lc}
    \toprule
    Criterion & Range of $q$ for detected genuine entanglement \\
    \midrule
    Theorem \ref{thm4} & $0.804647 \leq q \leq 1$ \\
    Theorem 4 in Ref.~\cite{LuQIP134} & $0.805132 \leq q \leq 1$ \\
    Theorem 4 in Ref.~\cite{SunPRA012415} & $0.805211 \leq q \leq 1$ \\
    Theorem 2 in Ref.~\cite{Qi2024} & $0.805318 \leq q \leq 1$ \\
    \bottomrule
  \end{tabular}
\end{table}
\end{example}

Next we consider the full separability of general multipartite states. Any multipartite state $\rho\in\mathbb{C}^{d_{1}} \otimes \mathbb{C}^{d_{2}} \otimes \cdots \otimes \mathbb{C}^{d_{n}}$ can be written as
$\rho=\sum_{k} Y^{1}_{k} \otimes Y^{2}_{k} \otimes \cdots \otimes Y^{n}_{k}$,
where $Y^{j}_{k} \in \mathbb{C}^{d_{j} \times d_{j}}$, $j=1,2, \cdots, n$. We define
\begin{equation}\label{eq:13}
\mathcal{Q}\mathcal{R}_{P_q, \cdots, P_n, \mu_{q}, \cdots, \mu_{n}}^{q}(\rho)=\sum_{k} Y^{1}_{k} \otimes \cdots \otimes Y^{q-1}_{k} \otimes \mathcal{Q}_{P_q, \cdots, P_n, \mu_{q}, \cdots, \mu_{n}}\Big[Y^{q}_{k}, \cdots, Y^{n}_{k}\Big],
\end{equation}
where $q=1,\cdots,n-1$, $P_i \in \mathbb{R}^{m \times r}$, $\operatorname{rank}(P_i)=r$, $\mu_i\in\mathbb{R}^{r \times 1}$, $i = q, \cdots, n$,
\begin{equation*}
\mathcal{Q}_{P_q, \cdots, P_n, \mu_{q}, \cdots, \mu_{n}}\Big[Y^{q}_{k}, \cdots, Y^{n}_k\Big]=
\left(\begin{array}{c}
P_q \\ [6pt]
\operatorname{vec}\left(Y^{q}_{k}\right)\mu_q^T
\end{array}\right)
\bigotimes_{j=0}^{n-q-1}
\begin{pmatrix}
P_{n-j}^T & \quad\mu_{n-j}\operatorname{vec}\left(Y^{n-j}_{k}\right)^{T}
\end{pmatrix}.
\end{equation*}
Denote $\mathfrak{M}(P_i,\mu_i):=\left\lVert\Big(P_i^TP_i+\mu_i\mu_i^T\Big)^{1/2}\right\rVert_{\mathrm{tr}}$. We have the following separability criterion for multipartite states.

\begin{theorem}\label{thm5}
If a multipartite state  $\rho\in\mathbb{C}^{d_{1}} \otimes \mathbb{C}^{d_{2}} \otimes \cdots \otimes \mathbb{C}^{d_{n}}$ is fully separable, then for any $1\leq q \leq n-1$,
\begin{equation}
\left\|\mathcal{Q} \mathcal{R}_{P_q, \cdots, P_n, \mu_{q}, \cdots, \mu_{n}}^{q}(\rho)\right\|_{\mathrm{tr}} \leq \prod_{i=q}^{n} \mathfrak{M}(P_i,\mu_i),
\end{equation}
where $\mathcal{Q} \mathcal{R}_{P_q, \cdots, P_n, \mu_{q}, \cdots, \mu_{n}}^{q}(\rho)$ is defined in Eq.~\eqref{eq:13}.
\end{theorem}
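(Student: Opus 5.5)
The proof will follow the pattern of Theorem \ref{thm1}: establish the bound for fully separable pure product states, then pass to mixtures by linearity and the triangle inequality. Write a fully separable state as $\rho=\sum_k p_k\,\rho^1_k\otimes\cdots\otimes\rho^n_k$ with each $\rho^j_k$ a pure state. Then take $Y^j_k=\rho^j_k$ for $j\ge 2$ and $Y^1_k=p_k\rho^1_k$ in the representation used in Eq.~\eqref{eq:13}. This absorbs the weights into the first factor; if $q=1$ I would instead absorb $p_k$ linearly into the final expression.

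The first thing to check is that $\mathcal{Q}\mathcal{R}^q$ is well defined, i.e. independent of the decomposition $\rho=\sum_k Y^1_k\otimes\cdots\otimes Y^n_k$. The map is multilinear in the $Y$'s only up to the constant blocks $P_i$, and these break linearity. So, exactly as with $\mathcal{Q}_{P,Q,\mu,\nu}$, the definition must be read as affine with weights summing to one. For unit-trace product decompositions this is the analogue of Lemma \ref{lem1}(1), and I will simply work with such convex product decompositions.

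The key step is then the following identity for a single product term:
\begin{equation*}
\sigma_k=\rho^1_k\otimes\cdots\otimes\rho^{q-1}_k\otimes \mathcal{Q}_{P_q,\cdots,P_n,\mu_q,\cdots,\mu_n}[\rho^q_k,\cdots,\rho^n_k].
\end{equation*}
Its trace norm factorizes, since $\|A\otimes B\|_{\mathrm{tr}}=\|A\|_{\mathrm{tr}}\|B\|_{\mathrm{tr}}$. The factors $\|\rho^j_k\|_{\mathrm{tr}}=1$ because they are states. The remaining bracket is a Kronecker product of the column block $U_q=\begin{pmatrix}P_q\\ \operatorname{vec}(\rho^q_k)\mu_q^T\end{pmatrix}$ and the row blocks $V_i^T=(P_i^T\ \ \mu_i\operatorname{vec}(\rho^i_k)^T)$. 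Its trace norm is therefore $\prod_{i=q}^n$ of the trace norms of these blocks.

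For each block I use $\|X\|_{\mathrm{tr}}=\|(X^TX)^{1/2}\|_{\mathrm{tr}}$ for the column block (and $\|(XX^T)^{1/2}\|_{\mathrm{tr}}$ for the row blocks). Since $\rho^i_k$ is pure, $\|\operatorname{vec}(\rho^i_k)\|=1$, and this gives $U_q^TU_q=P_q^TP_q+\mu_q\mu_q^T$, and likewise $V_iV_i^T=P_i^TP_i+\mu_i\mu_i^T$. Each factor thus equals $\mathfrak{M}(P_i,\mu_i)$, and $\|\sigma_k\|_{\mathrm{tr}}=\prod_{i=q}^n\mathfrak{M}(P_i,\mu_i)$.

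Finally, $\mathcal{Q}\mathcal{R}^q(\rho)=\sum_k p_k\sigma_k$, and the triangle inequality yields the claimed bound. One point needs care here: $\operatorname{vec}(Y)$ of a complex Hermitian matrix is complex, so $U_q^TU_q$ should really be $U_q^\dagger U_q$. The cross terms vanish or are real only when $\operatorname{vec}(\rho)^\dagger\operatorname{vec}(\rho)=\operatorname{Tr}\rho^2$ is used; this is the same convention as in Theorem \ref{thm1}, and I would follow it.

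The main obstacle is this well-definedness/affinity issue together with the handling of the weights $p_k$. The factorization itself is routine. I would resolve the obstacle by explicitly defining $\mathcal{Q}\mathcal{R}^q$ on convex product decompositions, in parallel with Lemma \ref{lem1}.
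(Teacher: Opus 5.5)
Your proposal is correct and is essentially the paper's proof: the triangle inequality over the convex pure-product decomposition, multiplicativity of the trace norm under Kronecker products, and the Gram-matrix computation showing that each block has trace norm $\mathfrak{M}(P_i,\mu_i)$ because $\|\operatorname{vec}(\rho^i_k)\|=1$. Your extra care is well placed on two points the paper passes over silently: reading the definition affinely so that the weights $p_k$ can be pulled out (including for $q=1$), and using $U_q^\dagger U_q$ rather than $U_q^TU_q$. One small correction: for the row blocks the relevant $r\times r$ Gram matrix is $V_i^T\overline{V_i}$, not $V_iV_i^T$.
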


\begin{proof}
Since any fully separable state $\rho$ can be written as
$\rho=\sum_{k} p_{k} \rho^{1}_{k} \otimes \rho^{2}_{k} \otimes \cdots \otimes \rho^{n}_{k}$,
where $p_{k} \in[0,1]$ with $\sum\limits_{k} p_{k}=1$, we have
\begin{equation*}
	\begin{aligned}
		\left\|\mathcal{Q}\mathcal{R}_{P_q, \cdots, P_n, \mu_{q}, \cdots, \mu_{n}}^{ q}(\rho)\right\|_{\mathrm{tr}}&=\left\|\sum\limits_{k} p_{k} \rho^{1}_{k} \otimes \cdots \otimes \rho^{q-1}_{k} \otimes \mathcal{Q}_{P_q, \cdots, P_n, \mu_{q}, \cdots, \mu_{n}}\Big[\rho^{q}_{k}, \cdots, \rho^{n}_{k}\Big]\right\|_{\mathrm{tr}}  \\
		&\leq \sum\limits_{k}p_{k}\left\|\rho^{1}_{k} \otimes \cdots \otimes \rho^{q-1}_{k} \otimes \mathcal{Q}_{P_q, \cdots, P_n, \mu_{q}, \cdots, \mu_{n}}\Big[\rho^{q}_{k}, \cdots, \rho^{n}_{k}\Big]\right\|_{\mathrm{tr}}  \\
		&=\sum\limits_{k} p_{k}\operatorname{Tr}\left(\rho^{1}_{k}\right) \cdots\operatorname{Tr}\left(\rho^{q-1}_{k}\right)\prod_{i=q}^{n}\left\|\left(
		\begin{array}{c}
			P_i \\ [6pt]
			\operatorname{vec}\left(\rho^{i}_{k}\right)\mu_i^T
		\end{array}\right)\right\|_{\mathrm{tr}} \\
		&=\sum\limits_{k} p_{k} \prod_{i=q}^{n} \mathfrak{M}(P_i,\mu_i) \\
		&=\prod_{i=q}^{n} \mathfrak{M}(P_i,\mu_i),
	\end{aligned}
\end{equation*}
which proves Theorem \ref{thm5}.
\end{proof}

\subsection{Lower bounds of genuine-multipartite-entanglement concurrence for tripartite states}

The genuine-multipartite-entanglement (GME) concurrence of a pure state $|\phi\rangle\in\mathbb{C}^{d}\otimes\mathbb{C}^{d}\otimes\mathbb{C}^{d}$ is defined by \cite{MaGMEconcurrence}
\begin{equation}
	C_{\mathrm{GME}}(|\phi\rangle)=\min\left\{\sqrt{1-\operatorname{Tr}(\rho_1^2)},\ \sqrt{1-\operatorname{Tr}(\rho_2^2)},\ \sqrt{1-\operatorname{Tr}(\rho_3^2)}\,\right\},
\end{equation}
where $\rho_k$ is the reduced density matrix of the subsystem $k$. The GME concurrence of a mixed state $\rho$ is defined as
\begin{equation}
	C_{\mathrm{GME}}(\rho)=\min_{\{p_k,|\phi_k\rangle\}}\sum_k p_k C_{\mathrm{GME}}(|\phi_k\rangle),
\end{equation}
where the minimum is taken over all possible ensemble decompositions of $\rho=\sum\limits_k p_k |\phi_k\rangle\langle\phi_k|$, $p_k \geq 0$ with $\sum\limits_k p_k =1$. The following theorem gives a lower bound for the GME concurrence. Its proof is provided in Appendix \ref{appendixC}.

\begin{theorem}\label{thm6}
For any tripartite state $\rho\in\mathbb{C}^{d}\otimes\mathbb{C}^{d}\otimes\mathbb{C}^{d}$, we have
\begin{equation*}
	C_{\mathrm{GME}}(\rho)\ \geq\ \frac{1}{\sqrt{(d-1)d\,}}\Big(\mathscr{Q}_{P,Q,\mu,\nu}(\rho)-\mathfrak{M}(P,Q,\mu,\nu)-\frac{2(d-1)}{3}\Big),
\end{equation*}
where $\mathscr{Q}_{P,Q,\mu,\nu}(\rho)$ is defined in Eq.~(\ref{tripartite concurrence}).
\end{theorem}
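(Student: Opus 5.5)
We reduce to pure states and then use the convex-roof structure, as in the proofs of Theorem~\ref{thm2} and Theorem~\ref{thm4}. Write $\mathfrak{M}=\mathfrak{M}(P,Q,\mu,\nu)$ and $K=\frac{1}{\sqrt{(d-1)d}}$. The target inequality for a pure state $|\phi\rangle\in\mathbb{C}^d\otimes\mathbb{C}^d\otimes\mathbb{C}^d$ is
\begin{equation*}
C_{\mathrm{GME}}(|\phi\rangle)\geq K\Big(\mathscr{Q}_{P,Q,\mu,\nu}(|\phi\rangle\langle\phi|)-\mathfrak{M}-\tfrac{2(d-1)}{3}\Big).
\end{equation*}
Once this holds, take an optimal decomposition $\rho=\sum_k p_k|\phi_k\rangle\langle\phi_k|$ attaining $C_{\mathrm{GME}}(\rho)$. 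By Lemma~\ref{lem1}(1) and the triangle inequality, applied to each of the three bipartite trace norms in Eq.~(\ref{tripartite concurrence}), we get $\mathscr{Q}_{P,Q,\mu,\nu}(\rho)\le\sum_k p_k\mathscr{Q}_{P,Q,\mu,\nu}(|\phi_k\rangle\langle\phi_k|)$. Averaging the pure-state inequality over $k$ then gives the theorem.

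For the pure-state inequality, fix the bipartition attaining the minimum in $C_{\mathrm{GME}}(|\phi\rangle)$, say $A|BC$ without loss of generality. Let $\lambda_0,\dots,\lambda_{d-1}$ be the Schmidt coefficients of $|\phi\rangle$ across this cut; there are at most $d$ since $\dim\mathcal{H}_A=d$. We have $1-\operatorname{Tr}\rho_A^2=\frac12 C(|\phi\rangle)^2$ in terms of the bipartite concurrence of Eq.~(\ref{concurrence}). The bound from \cite{ChenAlbeverioFei} used in Theorem~\ref{thm2} then gives
\begin{equation*}
C_{\mathrm{GME}}(|\phi\rangle)=\sqrt{1-\operatorname{Tr}\rho_A^2}\ \geq\ \frac{2}{\sqrt{(d-1)d}}\sum_{i<j}\sqrt{\lambda_i\lambda_j}.
\end{equation*}
Lemma~\ref{lem2}(1) for the cut $A|BC$ yields $2\sum_{i<j}\sqrt{\lambda_i\lambda_j}\ge \|\mathcal{Q}^{A|BC}(|\phi\rangle\langle\phi|)\|_{\mathrm{tr}}-\mathfrak{M}$, and therefore
\begin{equation*}
C_{\mathrm{GME}}(|\phi\rangle)\ge K\big(\|\mathcal{Q}^{A|BC}\|_{\mathrm{tr}}-\mathfrak{M}\big).
\end{equation*}

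Next we relate this single-cut quantity to the average $\mathscr{Q}$. For the other two cuts, Lemma~\ref{lem2}(2) gives $\|\mathcal{Q}^{B|AC}\|_{\mathrm{tr}},\|\mathcal{Q}^{C|AB}\|_{\mathrm{tr}}\le\mathfrak{M}+(d-1)$. Hence
\begin{equation*}
\mathscr{Q}-\mathfrak{M}-\tfrac{2(d-1)}{3}\le \tfrac13\big(\|\mathcal{Q}^{A|BC}\|_{\mathrm{tr}}-\mathfrak{M}\big).
\end{equation*}
We also know $\|\mathcal{Q}^{A|BC}\|_{\mathrm{tr}}-\mathfrak{M}\ge 0$ for pure states: the vector $(\mathfrak{M}\text{-part})$ block structure is not needed here, since the nonnegativity follows from the reverse direction of the same estimate (or we simply note that if the difference is negative, the right-hand side of the target inequality is already negative while $C_{\mathrm{GME}}\ge0$). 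In the nonnegative case, $\frac13 x\le x$ gives the pure-state inequality. In the negative case the right-hand side is negative and the inequality is trivial.

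The main obstacle is the concurrence normalization. Here $C_{\mathrm{GME}}$ lacks the factor $\sqrt2$ present in Eq.~(\ref{concurrence}), so we must check that the constant coming out of \cite{ChenAlbeverioFei} really is $\frac{2}{\sqrt{(d-1)d}}$. That constant matches the stated prefactor $K$ with room to spare, because of the factor $\frac13$ above. The other point needing care is doing the case split on the sign of $\|\mathcal{Q}^{A|BC}\|_{\mathrm{tr}}-\mathfrak{M}$ cleanly.
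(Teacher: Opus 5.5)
Your argument is correct and differs from the paper's only in how the two non-minimizing cuts and the sign are handled.

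\begin{itemize}
\item The paper applies the pure-state form of Theorem~\ref{thm2} to all three cuts, adds the three inequalities, and then uses $\sqrt{1-\operatorname{Tr}\rho_k^2}\le\sqrt{1-1/d}$ for the two remaining cuts. Since $\sqrt{d(d-1)}\sqrt{1-1/d}=d-1$, that step just re-derives Lemma~\ref{lem2}(2) through the concurrence. You invoke that lemma directly and need the concurrence bound only on the minimizing cut.
\item The paper avoids your sign case split by discarding the nonnegative term $2\sqrt{d(d-1)}\sqrt{1-\operatorname{Tr}\rho_1^2}$, instead of comparing $x$ with $x/3$.
\item Your normalization check is fine: $C^2=2(1-\operatorname{Tr}\rho_A^2)$ turns $\frac{8}{(d-1)d}$ into $\frac{4}{(d-1)d}$, giving exactly $\frac{2}{\sqrt{(d-1)d}}$.
\end{itemize}

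Delete one sentence: the claim that $\|\mathcal{Q}^{A|BC}(|\phi\rangle\langle\phi|)\|_{\mathrm{tr}}-\mathfrak{M}\ge 0$ for pure states is false. There is no ``reverse direction'' of Lemma~\ref{lem2} that yields it. Take $r=1$, $P=(\alpha)$, $Q=(0)$, $\mu=\nu=1$, so $\mathfrak{M}=\sqrt{\alpha^2+1}$. For the GHZ state across $A|BC$ (equivalently, a two-qubit maximally entangled state) one gets $\|\mathcal{Q}\|_{\mathrm{tr}}=\sqrt{\alpha^2/2+1/4}+3/2$. This is below $\mathfrak{M}$ for large $\alpha$: about $8.59$ versus $10.05$ at $\alpha=10$. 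Your case split already covers the negative case, so nothing else is affected.

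Your chain also proves more than the statement. The inequality
\[
C_{\mathrm{GME}}(|\phi\rangle)\ge Kx\ge 3K\bigl(\mathscr{Q}_{P,Q,\mu,\nu}(|\phi\rangle\langle\phi|)-\mathfrak{M}-\tfrac{2(d-1)}{3}\bigr)
\]
holds with no sign assumption and survives the convex-roof step. So the theorem holds with prefactor $3/\sqrt{(d-1)d}$; the paper loses this factor when it drops the nonnegative term above. For the three-qubit GHZ state with $r=1$, $P=Q=0$, $\mu=\nu=1$, this stronger bound equals $C_{\mathrm{GME}}=1/\sqrt{2}$, so it is attained.
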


\begin{example}
Consider the following state $\rho_{x}\in\mathbb{C}^2 \otimes \mathbb{C}^2 \otimes \mathbb{C}^2$,
\begin{equation*}
	\rho_{x}=(1-x)\left|\phi_{\operatorname{GHZ}}\right\rangle\left\langle\phi_{\operatorname{GHZ}}\right| + \frac{x}{8}I_{8}, \quad 0 \leq x \leq 1,
\end{equation*}
where $\left|\phi_{\operatorname{GHZ}}\right\rangle=\frac{1}{\sqrt{2}}(|000\rangle+|111\rangle)$, $I_{8}$ is the $8 \times 8$ identity matrix.
Let
\begin{equation*}
P=
\begin{pmatrix}
1.39 & -0.29 & 0.34 & 1 \\
0.19 & 0.93 & 1.65 & 1.58 \\
0.94 & 0.74 & 1.55 & 1.1
\end{pmatrix},\quad
Q=
\begin{pmatrix}
-0.37 & 0.15 & 3.28 & 2.03 \\
1.54 & 0.05 & 1 & 0.74 \\
-1.14 & 0.06 & 1.56 & 2.54
\end{pmatrix},\quad
\mu=
\begin{pmatrix}
\frac{1}{2} \\[4pt]
\frac{1}{2} \\[4pt]
\frac{1}{2} \\[4pt]
\frac{1}{2}
\end{pmatrix},\quad
\nu=
\begin{pmatrix}
\frac{1}{2} \\[4pt]
\frac{1}{2} \\[4pt]
\frac{1}{2} \\[4pt]
\frac{1}{2}
\end{pmatrix}.
\end{equation*}
 FIG.\ref{lower bound of GME concurrence} shows the lower bound of GME concurrence of $\rho_{x}$. The solid blue line is from our Theorem \ref{thm6}. It is seen that $\rho_x$ is genuine tripartite entangled for $0 \leq x \leq 0.193218$. The dashed orange line is from Theorem 6 in Ref.~\cite{LuQIP134} with $\mathbf{a}=(1,2)^T$ and $\mathbf{b}=(2,1)^T$ in the matrix (\ref{realignment03}), from which $\rho_x$ is genuine tripartite entangled for $0 \leq x \leq 0.193038$. The dot-dashed green line is from Theorem 6 in Ref.~\cite{SunPRA012415} with $\alpha=\beta=1$ and $l=2$ in the matrix (\ref{realignment02}), showing that $\rho_x$ is genuinely tripartite entangled for $0 \leq x \leq 0.192913$. The dotted pink line is from Theorem 3 in Ref.~\cite{Qi2024} with $\alpha=\beta=1$ in the matrix (\ref{realignment01}), from which $\rho_x$ is genuine tripartite entangled for $0 \leq x \leq 0.192758$. It is seen that our Theorem \ref{thm6} is more effective at detecting the genuine tripartite entanglement, yielding a larger lower bound on the GME concurrence for \(\rho_x\). These results are presented in Table~\ref{tabeg5} as follows.

\begin{table}[htbp]
  \centering
  \caption{Comparison of genuine tripartite entanglement detection ranges of $\rho_x$ by different criteria}
  \label{tabeg5}
  \begin{tabular}{lc}
    \toprule
    Criterion & Range of $q$ for detected genuine entanglement \\
    \midrule
    Theorem \ref{thm6} & $0 \leq x \leq 0.193218$ \\
    Theorem 6 in Ref.~\cite{LuQIP134} & $0 \leq x \leq 0.193038$ \\
    Theorem 6 in Ref.~\cite{SunPRA012415} & $0 \leq x \leq 0.192913$ \\
    Theorem 3 in Ref.~\cite{Qi2024} & $0 \leq x \leq 0.192758$\\
    \bottomrule
  \end{tabular}
\end{table}
\end{example}
\begin{figure}[htbp] 
    \centering
    \includegraphics[width=0.7\textwidth]{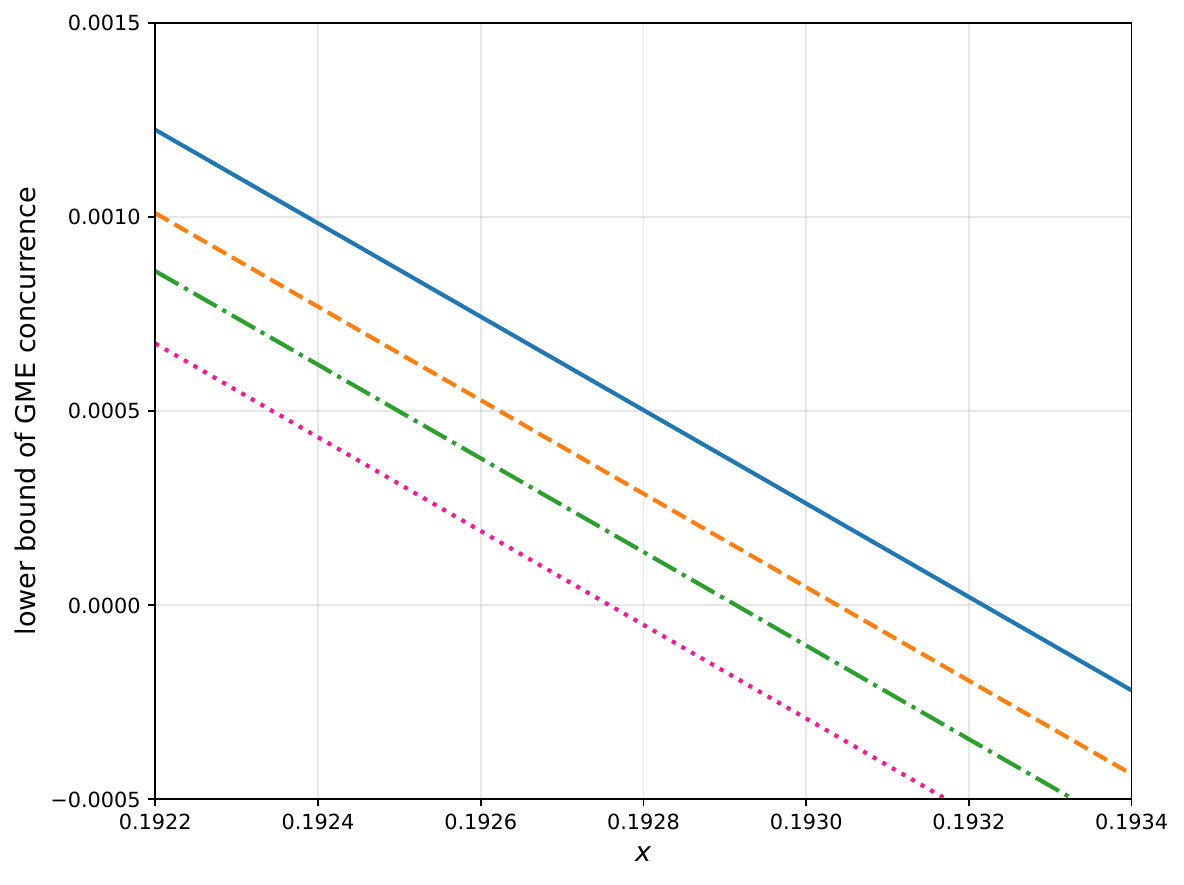}
    \caption{Lower bound of GME concurrence for $\rho_x$. The solid blue line is our result from Theorem \ref{thm6}. The dashed orange, dot-dashed green, dotted pink lines stand for the bounds from Theorem 6 in Ref.~\cite{LuQIP134}, Theorem 6 in Ref.~\cite{SunPRA012415}, Theorem 3 in Ref.~\cite{Qi2024}, respectively.}
    \label{lower bound of GME concurrence}
\end{figure}

\section{Conclusion}\label{sectIV}

We have presented new separability criteria by constructing matrices via the realignment of density matrices and the vectorization of reduced density matrices. The main novelty lies in the fact that the auxiliary block in the upper-left corner of the constructed matrix \(\mathcal{Q}_{P,Q,\mu,\nu}(\rho)\) can be an arbitrary rank-$r$ matrix \(PQ^T\), rather than a scalar or rank-one matrix adopted in prior works. In this setting, our matrix recovers those proposed by Shi and Sun, Sun \emph{et al.}, and Lu \emph{et al.} as special cases by setting \(r=1\). Using these criteria. we have derived lower bounds for the concurrence and convex-roof extended negativity of bipartite states. The criteria are further generalized to multipartite systems, yielding criteria for identifying genuine tripartite entanglement, full separability, and lower bounds on the genuine multipartite entanglement concurrence. Concrete examples illustrate that our criteria detect more entangled states and produce tighter lower bounds than the related existing realignment-based criteria.

Several problems merit further investigation. First, the free parameters P, Q, \(\mu\), and \(\nu\) in this work are selected via numerical optimization. It would be worthwhile to analytically characterize the optimal choices of these parameters, or to derive a parameter-independent optimized version of the criterion by taking the supremum over all admissible parameters, i.e., \(\mathcal{W}(\rho)=\sup_{P,Q,\mu,\nu}\Big[\big\|\mathcal{Q}_{P,Q,\mu,\nu}(\rho)\big\|_{\mathrm{tr}}-\mathfrak{M}(P,Q,\mu,\nu)\Big],\)
where the supremum is taken over \(P\in\mathbb{R}^{m\times r}\), \(Q\in\mathbb{R}^{n\times r}\), \(\mu,\nu\in\mathbb{R}^{r\times 1}\) satisfying \(\nu^T\mu=1\), such that \(\mathcal{W}(\rho)>0\) certifies the entanglement of \(\rho\). Second, our generalized block-matrix construction may be also used further to derive the measurement-based separability criteria, in which local measurement outcomes have been arranged into block matrices \cite{TangWu2023,LuPRA062424,LuAQT70344,LaiLuo2023,QiPangHou2025QIP}. It would also desirable to adopt our approach  to detect k-nonseparability and k-partite entanglement for multipartite quantum systems.

%

\appendix

\section{Proof of Lemma \ref{lem1}}\label{appendixA}

(1) The realignment map $\mathcal{R}$, the vectorization operator $\operatorname{vec}$, and the partial traces $\operatorname{Tr}_{A(B)}$ are all linear. From the definition of $\mathcal{Q}_{P,Q,\mu,\nu}$ in Eq. (\ref{realignment matrix}), we get
\begin{equation*}
\begin{aligned}
\mathcal{Q}_{P,Q,\mu,\nu}\left(\sum_{k=1} ^n p_k \rho_k\right)
&=\begin{pmatrix}
PQ^T & P\mu\operatorname{vec}\left(\operatorname{Tr}_A\left(\sum\limits_{k=1}^n p_k \rho_k\right)\right)^T \\
\operatorname{vec}\left(\operatorname{Tr}_B\left(\sum\limits_{k=1}^n p_k \rho_k\right)\right)\nu^TQ^T & \mathcal{R}\left(\sum\limits_{k=1}^n p_k \rho_k\right)
\end{pmatrix} \\
&=\begin{pmatrix}
\sum\limits_{k=1}^n p_k PQ^T & P\mu\operatorname{vec}\left(\sum\limits_{k=1}^n p_k \operatorname{Tr}_A(\rho_k)\right)^T \\
\operatorname{vec}\left(\sum\limits_{k=1}^n p_k \operatorname{Tr}_B(\rho_k)\right)\nu^TQ^T & \sum\limits_{k=1}^n p_k \mathcal{R}(\rho_k)
\end{pmatrix} \\
&=\begin{pmatrix}
\sum\limits_{k=1}^n p_k PQ^T & P\mu\sum\limits_{k=1}^n p_k \operatorname{vec}(\operatorname{Tr}_A(\rho_k))^T \\
\sum\limits_{k=1} ^n p_k \operatorname{vec}(\operatorname{Tr}_B(\rho_k))\nu^TQ^T & \sum\limits_{k=1}^n p_k \mathcal{R}(\rho_k)
\end{pmatrix} \\
&=\sum_{k=1}^n p_k
\begin{pmatrix}
PQ^T & P\mu\operatorname{vec}(\operatorname{Tr}_A(\rho_k))^T \\
\operatorname{vec}(\operatorname{Tr}_B(\rho_k))\nu^TQ^T & \mathcal{R}(\rho_k)
\end{pmatrix} \\
&=\sum_{k=1}^n p_k \mathcal{Q}_{P,Q,\mu,\nu}(\rho_k),
\end{aligned}
\end{equation*}
which proves (i) of Lemma \ref{lem1}.

(2) A general bipartite state $\rho$ can be written as $\rho=\sum\limits_k \xi_{k} \otimes \eta_k$, where $\xi_k \in \mathbb{C}^{d_A \times d_A}$ and $\eta_{k} \in \mathbb{C}^{d_B \times d_B}$. Denote $\sigma=(U \otimes V) \rho(U \otimes V)^{\dagger}$. We have
\begin{equation}\label{eq:A1}
\mathcal{Q}_{P,Q,\mu,\nu}(\sigma)=
\begin{pmatrix}
PQ^T & P\mu\operatorname{vec}(\operatorname{Tr}_A(\sigma))^T \\
\operatorname{vec}(\operatorname{Tr}_B(\sigma))\nu^TQ^T & \mathcal{R}(\sigma)
\end{pmatrix},
\end{equation}
where
\begin{align}\label{eq:A2}
P\mu\operatorname{vec}(\operatorname{Tr}_A(\sigma))^T&=P\mu\operatorname{vec}\left(\operatorname{Tr}_A\left((U \otimes V) \sum_k(\xi_k \otimes \eta_k)(U \otimes V)^{\dagger}\right)\right)^T \notag \\
&=P\mu\operatorname{vec}\left(\sum_k \operatorname{Tr}_A\left(U \xi_k U^{\dagger} \otimes V \eta_k V^{\dagger}\right)\right)^T \notag \\
&=P\mu\operatorname{vec}\left(\sum_k \operatorname{Tr}(\xi_k) V \eta_k V^{\dagger}\right)^T \notag \\
&=P\mu\sum_k \operatorname{Tr}(\xi_k) \operatorname{vec}\left(V \eta_k V^{\dagger}\right)^T \notag \\
&=P\mu \sum_k \operatorname{Tr}(\xi_k)\left((\overline{V} \otimes V) \operatorname{vec}(\eta_k)\right)^T \notag \\
&=P\mu \sum_k \operatorname{Tr}(\xi_k) \operatorname{vec}(\eta_k)^T(\overline{V} \otimes V)^T \notag \\
&=P\mu\operatorname{vec}\left(\sum_k \operatorname{Tr}_A(\xi_k \otimes \eta_k)\right)^T\left(\overline{V} \otimes V\right)^T \notag \\
&=P\mu \operatorname{vec}\left(\operatorname{Tr}_A(\rho)\right)^T\left(\overline{V} \otimes V\right)^T.
\end{align}
Here, we used $\operatorname{vec}(AXB)=\left(B^T\otimes A\right)\operatorname{vec}(X)$ for any $A\in\mathbb{C}^{m\times p}$, $X\in\mathbb{C}^{p\times q}$ and $B\in\mathbb{C}^{q\times n}$.

Similarly, we have
\begin{equation}\label{eq:A3}
\operatorname{vec}\left(\operatorname{Tr}_B(\sigma)\right)\nu^TQ^T=\left(\overline{U} \otimes U\right) \operatorname{vec}\left(\operatorname{Tr}_B(\rho)\right)\nu^TQ^T.
\end{equation}

Furthermore,
\begin{align}\label{eq:A4}
\mathcal{R}(\sigma)&=\sum_k\mathcal{R}\left((U \otimes V)\left(\xi_k \otimes \eta_k\right)\left(U \otimes V\right)^{\dagger}\right) \notag \\
&=\sum_k \mathcal{R}\left(U \xi_k U^{\dagger}\otimes V \eta_k V^{\dagger}\right)\notag\\
&=\sum_k \operatorname{vec}\left(U \xi_k U^{\dagger}\right) \operatorname{vec}\left(V \eta_k V^{\dagger}\right)^T \notag \\
&=\sum_k\left(\overline{U} \otimes U\right) \operatorname{vec}\left(\xi_k\right)\operatorname{vec}\left(\eta_k\right)^T\left(\overline{V} \otimes V\right)^T \notag \\
&=\sum_k(\overline{U} \otimes U) \mathcal{R}\left(\xi_k \otimes \eta_k\right)\left(\overline{V} \otimes V\right)^T \notag\\
&=\left(\overline{U} \otimes U\right) \mathcal{R}(\rho) \left(\overline{V} \otimes V\right)^T.
\end{align}
Here, we have used $\mathcal{R}(A\otimes B)=\operatorname{vec}(A)\operatorname{vec}(B)^T$ for matrices $A$, $B$.

Therefore, combining Eq. (\ref{eq:A1}), Eq. (\ref{eq:A2}), Eq. (\ref{eq:A3}) and Eq. (\ref{eq:A4}) we have
\begin{align*}
\mathcal{Q}_{P, Q, \mu, \nu}(\sigma)&=
\begin{pmatrix}
PQ^T & P\mu \operatorname{vec}\left(\operatorname{Tr}_A(\rho)\right)^T\left(\overline{V} \otimes V\right)^T \\
\left(\overline{U} \otimes U\right) \operatorname{vec}\left(\operatorname{Tr}_B(\rho)\right)\nu^TQ^T & \left(\overline{U} \otimes U\right) \mathcal{R}(\rho) \left(\overline{V} \otimes V\right)^T
\end{pmatrix} \\
&=\widetilde{U}\begin{pmatrix}
PQ^T & P\mu \operatorname{vec}\left(\operatorname{Tr}_A(\rho)\right)^T \\
\operatorname{vec}\left(\operatorname{Tr}_{B}(\rho)\right)\nu^TQ^T & \mathcal{R}(\rho)
\end{pmatrix}\widetilde{V} \\
&=\widetilde{U} \mathcal{Q}_{P,Q,\mu,\nu}(\rho) \widetilde{V},
\end{align*}
where
\begin{equation*}
\widetilde{U}=
\begin{pmatrix}
I_{m \times m} & 0 \\
0 & \overline{U} \otimes U
\end{pmatrix}, \qquad \widetilde{V}=
\begin{pmatrix}
I_{n \times n} & 0 \\
0 & V^{\dagger} \otimes V^T
\end{pmatrix}.
\end{equation*}

Hence,
\begin{equation*}
\left\|\mathcal{Q}_{P, Q, \mu, \nu}\left((U \otimes V) \rho\left(U \otimes V\right)^{\dagger}\right)\right\|_{\mathrm{tr}}=\left\|\widetilde{U} \mathcal{Q}_{P,Q,\mu, \nu}(\rho) \widetilde{V}\right\|_{\mathrm{tr}}=\left\|\mathcal{Q}_{P, Q, \mu, \nu}(\rho)\right\|_{\mathrm{tr}},
\end{equation*}
which proves (2) of Lemma \ref{lem1}.

\section{Proof of Lemma \ref{lem2}}\label{appendixB}

(1) Let $|\phi\rangle=\sum\limits_{k=0}^{d-1} \sqrt{\lambda_k}\left|a_k b_k\right\rangle=(U_A \otimes V_B)|\psi\rangle$, where $U_A$ and $V_B$ are unitary operators defined by $U_A|j\rangle = |a_k\rangle$ and $V_B|j\rangle = |b_k\rangle$ respectively, and $|\psi\rangle=\sum\limits_{k=0}^{d-1} \sqrt{\lambda_k}\left|k_A k_B\right\rangle$. As
\begin{equation*}
|\psi\rangle\langle\psi|=\sum\limits_{i=0}^{d-1} \sum\limits_{j=0}^{d-1} \sqrt{\lambda_i \lambda_j}\left|i_A i_B\right\rangle\left\langle j_A j_B\right|,
\end{equation*}
from (2) of Lemma \ref{lem1} one has
\begin{equation*}
\norm{\mathcal{Q}_{P, Q, \mu, \nu}(|\phi\rangle\langle\phi|)}_{\mathrm{tr}}=\norm{\mathcal{Q}_{P, Q, \mu, \nu}(|\psi\rangle\langle\psi|)}_{\mathrm{tr}}=
\begin{pmatrix}
PQ^T & P\mu\operatorname{vec}\left(\operatorname{Tr}_A(|\psi\rangle\langle\psi|)\right)^T \\
\operatorname{vec}\left(\operatorname{Tr}_B(|\psi\rangle\langle\psi|)\right)\nu^TQ^T & \mathcal{R}(|\psi\rangle\langle\psi|)
\end{pmatrix}.
\end{equation*}

Let
\begin{equation*}
\mathrm{Q}_1=
\begin{pmatrix}
PQ^T & P\mu M \\
M^T\nu^TQ^T & \Omega_1
\end{pmatrix},~~
\mathrm{Q}_2=
\begin{pmatrix}
0 & 0 \\
0 & \Omega_2
\end{pmatrix},
\end{equation*}
where
\begin{equation*}
M=(\lambda_0,\underbrace{0,\cdots,0}_{d-1},0,\lambda_1,\underbrace{0,\cdots,0}_{d-2}, 0, 0, \lambda_2, \underbrace{0, \cdots, 0}_{d-3}, \cdots,\underbrace{0,\cdots,0}_{d-1},\lambda_{d-1})_{1 \times d^2},
\end{equation*}

\begin{equation*}
\Omega_1=\operatorname{diag}(\lambda_0, \underbrace{0, \cdots, 0}_{d-1}, 0, \lambda_{1}, \underbrace{0, \cdots, 0}_{d-2}, 0, 0, \lambda_2, \underbrace{0, \cdots, 0}_{d-3},\cdots, \underbrace{0, \cdots, 0}_{d-1}, \lambda_{d-1})\in \mathbb{C}^{d^2 \times d^2},
\end{equation*}

\begin{equation*}
\begin{aligned}
\Omega_{2}=&\operatorname{diag}\left(0, \sqrt{\lambda_{0} \lambda_{1}}, \cdots, \sqrt{\lambda_{0} \lambda_{d-1}}, \sqrt{\lambda_{1} \lambda_{0}}, 0, \sqrt{\lambda_{1} \lambda_{2}}, \cdots, \sqrt{\lambda_{1} \lambda_{d-1}}, \right. \\
& \left.\sqrt{\lambda_{2} \lambda_{0}}, \sqrt{\lambda_{2} \lambda_{1}}, 0, \sqrt{\lambda_{2} \lambda_{3}}, \cdots, \sqrt{\lambda_{2} \lambda_{d-1}}, \cdots, \sqrt{\lambda_{d-1} \lambda_{0}}, \cdots, \sqrt{\lambda_{d-1} \lambda_{d-2}}, 0\right)\in \mathbb{C}^{d^2 \times d^2}.
\end{aligned}
\end{equation*}

Using the definition of trace norm we get
\begin{equation*}
\left\|\mathcal{Q}_{P,Q,\mu, \nu}(|\phi\rangle\langle\phi|)\right\|_{\mathrm{tr}}=\left\|\mathrm{Q}_1\right\|_{\mathrm{tr}}+\left\|\mathrm{Q}_2\right\|_{\mathrm{tr}},
\end{equation*}
where $\left\|\mathrm{Q}_2\right\|_{\mathrm{tr}}=2\sum\limits_{0 \leq i<j \leq d-1} \sqrt{\lambda_i \lambda_j}$. Since $\Omega_1=\mathcal{R}\left(\Omega_1\right)$ and $\Omega_1$ is a separable state, according to Theorem \ref{thm1} we get
\begin{equation*}
\left\|\mathrm{Q}_1\right\|_{\mathrm{tr}}=\left\|\mathcal{Q}_{P,Q,\mu, \nu}\left(\Omega_1\right)\right\|_{\mathrm{tr}} \leq \mathfrak{M}({P,Q,\mu,\nu}).
\end{equation*}
Therefore,
\begin{equation*}
\left\|\mathcal{Q}_{P, Q, \mu, \nu}(|\phi\rangle\langle\phi|)\right\|_{\mathrm{tr}}\leq\mathfrak{M}({P,Q,\mu,\nu})+2\sum_{0 \leq i<j \leq d-1} \sqrt{\lambda_i \lambda_j}.
\end{equation*}

(2) Note that for any $\lambda_k \geq 0$, $k=0, 1, \cdots, d-1$, such that $\lambda_0 + \lambda_1 + \cdots + \lambda_{d-1} = 1$, one has
\begin{equation*}
2\sum_{0 \leq i<j \leq d-1} \sqrt{\lambda_i \lambda_j} =\left(\sum_{k=0}^{d-1}\sqrt{\lambda_k}\right)^2-1 \leq d-1.
\end{equation*}
Based on the relation above and (1) of Lemma \ref{lem2}, we get
\begin{equation*}
\left\|\mathcal{Q}_{P, Q, \mu, \nu}(|\phi\rangle\langle\phi|)\right\|_{\mathrm{tr}}\leq\mathfrak{M}({P,Q,\mu,\nu})+2 \sum_{0 \leq i<j \leq d-1} \sqrt{\lambda_i\lambda_j}\leq\mathfrak{M}({P,Q,\mu,\nu})+(d-1).
\end{equation*}
Therefore, we complete the proof of (2) in Lemma \ref{lem2}.

\section{Proof of Theorem \ref{thm6}}\label{appendixC}

\begin{proof}
For a pure state $|\phi\rangle$ in $\mathbb{C}^{d} \otimes \mathbb{C}^{d} \otimes \mathbb{C}^{d}$, using Eq.~(\ref{concurrence}) and Theorem \ref{thm2} we get
\begin{equation}\label{C:1}
\sqrt{1-\operatorname{Tr}\left(\rho_{1}^{2}\right)} \geq \frac{1}{\sqrt{(d-1)d\,}}\Big(\left\|\mathcal{Q}_{P,Q,\mu,\nu}^{1|23}\left(|\phi\rangle\langle\phi|\right)\right\|_{\mathrm{tr}}-\mathfrak{M}({P,Q,\mu,\nu})\Big),
\end{equation}
\begin{equation}\label{C:2}
\sqrt{1-\operatorname{Tr}\left(\rho_{2}^{2}\right)} \geq \frac{1}{\sqrt{(d-1)d\,}}\Big(\left\|\mathcal{Q}_{P,Q,\mu,\nu}^{2|13}
\left(|\phi\rangle\langle\phi|\right)\right\|_{\mathrm{tr}}-\mathfrak{M}({P,Q,\mu,\nu})\Big)
\end{equation}
and
\begin{equation}\label{C:3}
\sqrt{1-\operatorname{Tr}\left(\rho_{3}^{2}\right)} \geq \frac{1}{\sqrt{(d-1)d\,}}\Big(\left\|\mathcal{Q}_{P,Q,\mu,\nu}^{3|12}
\left(|\phi\rangle\langle\phi|\right)\right\|_{\mathrm{tr}}-\mathfrak{M}({P,Q,\mu,\nu})\Big).
\end{equation}

Combining (\ref{C:1}), (\ref{C:2}) and (\ref{C:3}) we get
\begin{equation*}
\begin{aligned}
&\qquad3\sqrt{(d-1)d}\sqrt{1-\operatorname{Tr}\left(\rho_{1}^{2}\right)}-3\mathscr{Q}_{P,Q,\mu,\nu}(|\phi\rangle\langle\phi|)+3\,\mathfrak{M}(P,Q,\mu,\nu)+2(d-1) \\
&=3\sqrt{d(d-1)} \sqrt{1-\operatorname{Tr}\left(\rho_{1}^{2}\right)}-\Big(\left\|\mathcal{Q}_{P,Q,\mu,\nu}^{1|23}\left(|\phi\rangle\langle\phi|\right)\right\|_{\mathrm{tr}}+\left\|\mathcal{Q}_{2|13}\left(|\phi\rangle\langle\phi|\right)\right\|_{\mathrm{tr}}+\left\|\mathcal{Q}_{3|12}\left(|\phi\rangle\langle\phi|\right)\right\|_{\mathrm{tr}}\Big) \\
&\quad+3\,\mathfrak{M}(P,Q,\mu,\nu)+2(d-1) \\
&\geq 3\sqrt{(d-1)d} \sqrt{1-\operatorname{Tr}\left(\rho_{1}^{2}\right)}-\sqrt{(d-1)d\,}\left(\sqrt{1-\operatorname{Tr}\left(\rho_{1}^{2}\right)}+\sqrt{1-\operatorname{Tr}\left(\rho_{2}^{2}\right)}+\sqrt{1-\operatorname{Tr}\left(\rho_{3}^{2}\right)}\,\right)+2(d-1) \\
&=2\sqrt{d(d-1)} \sqrt{1-\operatorname{Tr}\left(\rho_{1}^{2}\right)}-\sqrt{d(d-1)}
\left(\sqrt{1-\operatorname{Tr}\left(\rho_{2}^{2}\right)}+\sqrt{1-\operatorname{Tr}\left(\rho_{3}^{2}\right)}\,\right)+2(d-1).
\end{aligned}
\end{equation*}

Since $\sqrt{1-\operatorname{Tr}\left(\rho_{1}^{2}\right)} \geq 0$ and $\sqrt{1-\operatorname{Tr}\left(\rho_{k}^{2}\right)} \leq \sqrt{1-\frac{1}{d}}$ $(k=2,3)$, we obtain
\begin{equation*}
\begin{aligned}
& \quad 3\sqrt{(d-1)d\,}\sqrt{1-\operatorname{Tr}\left(\rho_{1}^{2}\right)}-3\mathscr{Q}_{P,Q,\mu,\nu}(|\phi\rangle\langle\phi|)+3\,\mathfrak{M}(P,Q,\mu,\nu)+2(d-1) \\
&\geq 2\sqrt{(d-1)d\,} \sqrt{1-\operatorname{Tr}\left(\rho_{1}^{2}\right)}-\sqrt{d(d-1)}
\left(\sqrt{1-\operatorname{Tr}\left(\rho_{2}^{2}\right)}+\sqrt{1-\operatorname{Tr}\left(\rho_{3}^{2}\right)}\,\right)+2(d-1) \\
& \geq 2(d-1)-2\sqrt{d(d-1)}\sqrt{1-\frac{1}{d}} \\
& = 0.
\end{aligned}
\end{equation*}
Therefore
\begin{equation*}
\sqrt{1-\operatorname{Tr}\left(\rho_{1}^{2}\right)} \geq \frac{1}{\sqrt{(d-1)d\,}}\bigg(\mathscr{Q}_{P,Q,\mu,\nu}(|\phi\rangle\langle\phi|)-\mathfrak{M}(P,Q,\mu,\nu)-\frac{2(d-1)}{3}\bigg).
\end{equation*}

Similarly, for $k=2,3$, we have
\begin{equation*}
\sqrt{1-\operatorname{Tr}\left(\rho_{k}^{2}\right)} \geq \frac{1}{\sqrt{(d-1)d\,}}\bigg(\mathscr{Q}_{P,Q,\mu,\nu}(|\phi\rangle\langle\phi|)-\mathfrak{M}(P,Q,\mu,\nu)-\frac{2(d-1)}{3}\bigg).
\end{equation*}
	
Now let $\left\{p_{k},\left|\phi_{k}\right\rangle\right\}$ be the optimal decomposition for $\rho$ such that $C_{\text{GME}}(\rho)=\sum\limits_{k} p_{k} C_{\text{GME}}\left(\left|\phi_{k}\right\rangle\right)$. Then
\begin{equation*}
\begin{split}
C_{\text{GME}}(\rho)&=\sum\limits_{k} p_{k} C_{\text{GME}}\left(\left|\phi_{k}\right\rangle\right) \\
& \geq \frac{1}{\sqrt{(d-1)d\,}} \sum\limits_{k} p_{k}\bigg(\mathscr{Q}_{P,Q,\mu,\nu}(|\phi_{k}\rangle\langle\phi_{k}|)-\mathfrak{M}(P,Q,\mu,\nu)-\frac{2(d-1)}{3}\bigg) \\
& \geq \frac{1}{\sqrt{(d-1)d\,}}\bigg(\mathscr{Q}_{P,Q,\mu,\nu}(\rho)-\mathfrak{M}(P,Q,\mu,\nu)-\frac{2(d-1)}{3}\bigg),
\end{split}
\end{equation*}
which completes the proof of Theorem \ref{thm6}.
\end{proof}

\end{document}